\documentclass[nofootinbib]{article}



\usepackage[top=2cm,bottom=2cm,left=2cm,right=2cm,a4paper]{geometry}

\setlength\parindent{0cm}
\setlength\parskip{0.2cm}


\usepackage{amsfonts, amsmath, amsthm, graphicx, psfrag, array, bbm, float, multirow}

\usepackage{color}

\usepackage{hyperref}


\definecolor{reddish}{rgb}{0.75,0.2,0.2}


\definecolor{linkc}{rgb}{0,0,0.3}
\definecolor{rule}{rgb}{0.7,0.7,0.7}

\hypersetup{colorlinks=true, linkcolor=black, urlcolor=linkc, citecolor=linkc, bookmarksopen}



\newcommand{\be}{\begin{equation}}
\newcommand{\ee}{\end{equation}}

\newcommand{\bes}{\begin{eqnarray}}
\newcommand{\ees}{\end{eqnarray}}

\newcommand{\ba}{\begin{array}}
\newcommand{\ea}{\end{array}}







\newcommand{\cB}{\mathcal{B}}

\newcommand{\cE}{\mathcal{E}}
\newcommand{\cF}{\mathcal{F}}
\newcommand{\cG}{\mathcal{G}}

\newcommand{\cJ}{\mathcal{J}}

\newcommand{\cN}{\mathcal{N}}

\newcommand{\cR}{\mathcal{R}}

\newcommand{\cV}{\mathcal{V}}


\newcommand{\lo}{\textsc{lo}}
\newcommand{\nlo}{\textsc{nlo}}

\newcommand{\supermelon}{\textsc{supermelon}}
\newcommand{\twodipole}{\textsc{2--dipole}}

\newcommand{\onepi}{\textsc{1pi}}

\newcommand{\up}[1]{{}^{(#1)}}

\newcommand{\iid}{\textsc{iid}}
\newcommand{\sym}{\textsc{sym}}

\newcommand{\tr}{\textrm{tr}}

\newcommand{\widesim}[2][1.5]
{
  \quad\mathrel{\overset{#2}{\scalebox{#1}[1]{$\sim$}}}\quad
}



\makeatletter
\newtheorem*{rep@theorem}{\rep@title}
\newcommand{\newreptheorem}[2]{%
\newenvironment{rep#1}[1]{%
 \def\rep@title{#2 \ref{##1}}%
 \begin{rep@theorem}}%
 {\end{rep@theorem}}}
\makeatother

\newtheorem{theorem}{Theorem}[section]
\newreptheorem{theorem}{Theorem}

\newtheorem{lemma}[theorem]{Lemma}
\newreptheorem{lemma}{Lemma}

\newtheorem{proposition}[theorem]{Proposition}
\newreptheorem{proposition}{Proposition}

\newtheorem{corollary}[theorem]{Corollary}
\newreptheorem{corollary}{Corollary}

\newreptheorem{remark}{Remark}

\theoremstyle{definition}
\newtheorem{definition}[theorem]{Definition}
\newreptheorem{definition}{Definition}


\begin{document}


\title{\bf Towards a double--scaling limit for tensor models: probing sub-dominant orders}

\author{
Wojciech Kami\'nski\textsuperscript{a}, Daniele Oriti\textsuperscript{b}, James P.~Ryan\textsuperscript{b}\\[0.2cm]
\small 
a. Wydzia{\l} Fizyki, Uniwersytet Warszawski, ul Ho\.za 69, 00-681, Warsaw,
Poland\\
\small 
b. MPI f\"ur Gravitationsphysik, Albert Einstein Institute,  Am M\"uhlenberg 1, D-14476 Potsdam, Germany
 }

\date{}

\maketitle

\begin{abstract}
\noindent The definition of a double--scaling limit represents an important goal in the development of tensor models. We take the first steps towards this goal by extracting and analysing the next--to--leading order contributions, in the 1/N expansion, for the \iid\ tensor models.  We show that the radius of convergence of the \nlo\ series coincides with that of the leading order melonic sector. Meanwhile, the value of the susceptibility exponent, $\gamma_{\nlo} = 3/2$, signals a departure from the leading order behaviour.  Both pieces of information provide clues for a non--trivial double--scaling limit, for which we put forward some precise conjecture.
\end{abstract}



\section{Introduction}
\label{sec:intro}

Growing evidence is being accumulated for the (Tensorial) Group Field Theory ((T)GFT) formalism \cite{GFT1,GFT2,GFT3, vincent} as a promising overarching framework for a quantum theory of gravity; one that is able to incorporate aspects of several current discrete approaches within a powerful quantum field theory setting. TGFTs are theories of rank--$D$ tensorial fields  which generate, in their perturbative expansions, a sum over $D$--dimensional cellular (usually, simplicial) complexes. Their simplest incarnation are tensor models \cite{review, tensor},  wherein the tensors have finite index sets of size $N$.  These were proposed already in the early '90s as an attempt to reproduce, in $3d$ and $4d$, the successes of the matrix model formalism in defining both a controllable sum over topologies and a theory of random discrete geometries with a nice continuum limit (given in $2d$ by Liouville gravity). Such tensor models describe discrete geometry is purely combinatorial terms (the natural notion of distance being the graph distance on each cellular complex). Their Feynman amplitudes can thus be understood in terms of the Regge action for discrete gravity evaluated on equilateral triangulations. Moreover, the perturbative sum over Feynman diagrams coincides with the definition of quantum gravity given by the (Euclidean) Dynamical triangulations approach \cite{DT}, after appropriate identification of their respective parameter sets. When one enriches the combinatorics of tensor models with the group-theoretic data suggested by Loop Quantum Gravity \cite{lqg}, Spin Foam models \cite{SF} and simplicial geometry \cite{dupuis}, one obtains (Tensorial) Group Field Theories: proper field theories, with richer state spaces (with generic states being superpositions of spin networks) and quantum amplitudes, given by simplicial path integrals and spin foam models. It is these richer field theories, building up on the understanding of quantum geometry obtained in loop quantum gravity, that we believe offer the most promising candidates for a complete quantum theory of gravity. Actually, with the appropriate data and constructions \cite{GFT-EPRL, AristideDaniele}, TGFTs provide what can be argued to be the best fundamental definition of covariant loop quantum gravity dynamics, adapted to a simplicial context. In particular, TGFTs provide loop gravity and spin foams, as well as dynamical triangulations, with powerful, analytic field theoretic tools, suited to study of non-perturbative physics, the dynamics of many degrees of freedom, and the extraction of effective continuum geometry.

While a main motivation for TGFTs is quantum gravity, this is not their only reason of interest. TGFTs can be seen, more generally, as a new class of quantum field theories, posing interesting mathematical challenges, in particular from the axiomatic and renormalization theory perspective \cite{vincent}. At the same time, they define a new approach to statistical systems on random lattices, such as spin glasses, dimers, Ising and loop models, and in this direction they have already produced interesting results \cite{spinglasses, ising,dimers, loopgas}.

As mentioned, and whatever the perspective, their crucial asset is to provide a new setting in which unsolved problems can be tackled with the aid of powerful analytical tools from statistical and quantum field theory. In fact, many important results have been obtained in the last few years, confirming such potential. It is not the place to review  all these results \cite{review, GFT1, GFT2, GFT3}. Beyond model building of 4d gravity models, mainly from the spin foam and loop quantum gravity perspective, as well as the associated study of their quantum geometric degrees of freedom (see \cite{EPRL, AristideDaniele, EteraMaite} and references therein), work in tensor models includes: \textit{i}) a detailed understanding of the combinatorics and topology of the cellular complexes generated in perturbative expansion, which takes advantage of results in combinatorial topology \cite{FG},  concerns the absence of extended topological singularities \cite{RazvanLost}, as well as the presence of embedded Riemann surfaces \cite{jimmy}; \textit{ii}) the important identification of a large--$N$ expansion for tensor models and topological GFTs \cite{expansion1,expansion2, expansion3} (other types of large-N expansion have been proposed in \cite{expansion4, expansion5}); leading then to \textit{iii}) many further results concerning the critical behaviour of various tensor models \cite{critical, DarioRazvan} and topological GFTs; and \textit{iv}) the identification the leading order sector as branched polymers \cite{branched}. Many more results concern field theory aspects of the formalism, including universality \cite{universality, colorless}, scaling behaviour \cite{dansyl}, renormalizability \cite{vincentjoseph1,joseph2,joseph3, dansyl1, dansyl3, vignes, eterajoseph, josephvalentin, sf5}, Schwinger-Dyson equations \cite{sdequations,sdequations1,sdequations2,sdequations3} and quantum and classical symmetries \cite{joseph,AristideDanieleFlorian}, non-perturbative aspects \cite{VincentMatteo, RazvanLVE}. Finally, ways to extract effective continuum physics have been explored, for example \cite{EteraWinston,DanieleEteraFlorian,JimmyEteraDaniele,DanieleEmergentMatter, DanieleLorenzo}, culminating in the recent \cite{GFTcosmo}.

Despite all these recent successes, much remains to be done. In particular, given its crucial role in ensuring analytic control over the perturbative expansion of these models, it is important that we improve our understanding of the large--$N$ expansion of both tensor models and the more involved TGFTs. The first step is to go beyond the leading order in such an expansion, which is by now well understood, with the aim to understand the next--to--leading order (and possibly yet more sub--dominant) behaviour. In full TGFTs, subdominant processes may become dominant in certain scenarios, e.g. for particular boundary states, or with different choices of weights (that is, within different models). Even in the simplest tensor models, control over sub--dominant orders is necessary to be able to define double (and multiple) scaling limits.  While for higher--dimensional models, one should not expect two parameters to control the full series, the two parameters in the simplest \iid\ model should at least allow one to extract a broader subclass of graphs than just the leading order graphs.  In turn, they should capture better the statistical and topological properties of the sum over complexes and reveal new critical behaviour, as has been achieved in matrix models \cite{matrixdouble,tensordouble}. From the perspective of Dynamical Triangulations, the aim is to study analytically the continuum limit for finite Newton's constant.

In this paper, we study the next--to--leading order in the large-N expansion, focusing on the
simplest tensor models: the independent identically distributed (\iid) tensor model, in any
dimension. We consider this the necessary first step before tackling more involved tensor models or
TGFTs proper. To begin, we provide a brief review of such \iid\ model, the combinatorial structures
arising in their perturbative expansion, some key tools for their analysis, the large--N expansion
and the leading (melonic) order. We then move on to present the new results of our work. Our first
main result is that we identify the graphs contributing to the next--to--leading order, starting
from their core graphs.  
We show that they correspond to a precise family of graphs decorated by melons, generalizing melonic
diagrams with a single 2-dipole insertion.
  We then show that it is possible to use the Schwinger--Dyson
equations of the model to obtain a closed expression for the connected 2--point function at
next--to--leading order, as a function of the same quantity at leading order. From this, one can
extract the critical behaviour of the free energy for next--to--leading order graphs. We show that
the critical value of the coupling constant is the same as at leading order, and we identify the new
critical (susceptibility) exponent. This is our second main result. In the process, we unravel a few
more interesting technical properties of the combinatorial structures generated by tensor models.
Together with most of the technical details and proofs of the main results, they can be found in a
final appendix. We close with an extended discussion of the double scaling limit, explaining the
implications of our results for this issue, and putting forward some precise conjectures concerning
its realization.

\subsection{Random matrices, 2d dynamical triangulations and double scaling limit}
Before moving into our tensor model case, we summarize some key results from matrix models. This should serve to clarify some of our motivations as well as providing a template of what one could hope to achieve in simple tensor models.

Consider a matrix model, based on a complex $N\times N$ matrix $M$, defined by the partition function:
\begin{equation}
\label{eq:mmdef}
Z_{N,g} = \int [dM\,d\overline M] \; e^{-N\,S_{g}(M,\overline M)}\;,\quad\textrm{where} \quad
S_{g}(M, \overline M) = \tr\big(M M^\dagger\big) + \frac{\lambda}{3!}\tr\big(M^3\big) + \frac{\bar \lambda}{3!} \tr\big((M^\dagger)^3\big)\;,
\end{equation}
 $M^{\dagger} = \overline M^{T}$ and $g = \lambda\bar\lambda$. A Taylor expansion of the associated free energy per (complex) degree of freedom ($E_{N,g} \equiv (1/N^2)\log Z_{N,g}$) results in a weighted sum over connected Feynman diagrams $\cG$, dual to triangulations of orientable 2--dimensional surfaces:
\begin{equation}
\label{eq:mmTaylor}
E_{N,g} = \sum_{\mathcal{G}} \frac{1}{\sym(\cG)}\;g^{|\cV_{\cG}|/2}\; N^{ - 2h_\cG}\;,
\end{equation}
where $h_\cG$ is the genus of the surface encoded by the graph $\cG$ and $\sym(\cG)$ is a symmetry factor.\footnote{The genus of an orientable triangulated surface is defined as $h_\cG = 1 - (|\cV_\cG| - |\cE_\cG| + |\cF_\cG|)/2$, where $|\cV_\cG|$, $|\cE_\cG|$ and $|\cF_\cG|$ are the vertices, edges and faces of $\cG$, respectively.}  One can organize the graphs according to their topology in a $1/N$--expansion:
\begin{equation}
\label{eq:mmExpansion}
E_{N,g} = \sum_{h\geq 0} E_{h,g}\; N^{ - 2h}\;, \qquad\textrm{where} \qquad E_{h,g} = \sum_{\cG \;:\; h_\cG = h} \frac{1}{\sym(\cG)}\;g^{|\cV_\cG|/2}\;,
\end{equation}
and it is clear that only the $h = 0$ sector of graphs, that is the spherical triangulations, survives in the large--$N$ limit. Analyzing the series defined by $E_{0,g}$, one finds that it has a finite radius of convergence $g_c$ and leading order (non-analytic) behaviour given by:
\begin{equation}
E_{0,g} \sim \alpha_0 \left(1 - \frac{g}{g_c}\right)^{2 - \gamma} \;,\qquad \textrm{where} \qquad \gamma = -\frac{1}{2}\;,
\end{equation}
where the critical exponent $\gamma$ is known as the string susceptibility. As one tunes the coupling constant to its critical value, $g\rightarrow g_c$, a non-perurbative regime is reached, controlled by those graphs with increasingly large numbers of vertices.

A double scaling limit ensues from the fact that the series in $g$ at all orders in the $1/N$--expansion have the same radius of convergence $g_c$, although different critical behaviours given by:
\begin{equation}
\label{eq:mmCB}
E_{h,g} \sim \alpha_h \left(1 - \frac{g}{g_c}\right)^{(2 - \gamma)(1 - h)}\;.
\end{equation}
Up to issues of stability for the series in $1/N$, if one tunes:
\begin{equation}
\label{eq:mmDouble}
N \rightarrow \infty\;, \quad g\rightarrow g_c\;, \quad\textrm{with}\quad N\left( 1 - \frac{g}{g_c}\right)^{(2-\gamma)/2} = \kappa\;,
\end{equation}
where $\kappa$ is a constant, then the series $F_{N,g} \equiv N^2 E_{N,g} \sim \sum_{h} \alpha_h \; \kappa^{(2 - h)}$ includes contributions from all topologies. This is the double scaling limit.

To provide these amplitudes with a gravitational interpretation, one defines bare Newton's ($G$) and cosmological ($\Lambda$) constants through:
\begin{equation}
\label{eq:mmbare}
\log N = \frac{1}{8G}\qquad \textrm{and}\qquad \log g =   - \frac{\sqrt{3}}{2}a^2\Lambda \;,
\end{equation}
where $a$ is a length, then one may recast the weights into the following form:
\begin{equation}
\label{eq:mmEdt}
g^{|\cV_{\cG}|/2}\; N^{2 - 2h_\cG} = \exp\left[{-\frac{\sqrt{3}}{4}a^2\Lambda\, \cN_2 +
{\frac{1}{16 G}\left(2\cN_0 - \, \cN_2\right)}}\right] =
e^{-S_{G,\Lambda,a}(\cN_0, \cN_2)}\;.
\end{equation}
where $\cN_2 = |\cV_\cG|$ and $\cN_0 = |\cF_G|$ are the numbers of triangles and vertices respectively in the triangulation represented by $\cG$.  The exponent on the right hand side is the Regge action for an equilateral triangulation with edge length $a$ and thus, is of the form prescribed by the Euclidean Dynamical Triangulations (EDT) approach to 2--dimensional quantum gravity.

In this gravitational re-phrasing, the large--$N$ limit corresponds to the limit in which (the bare) Newton's constant vanishes: $G \rightarrow 0$.  Tuning the coupling constant $g$ to its critical value, and thus  to a regime controlled by those graphs with increasingly large numbers of triangles, corresponds to the large--volume limit. However, by tuning the edge length to zero simultaneously, one can obtain a continuum limit characterized by surfaces with finite macroscopic area. The expectation of the area observable $A \equiv (\sqrt{3}/{4})\,a^2\, \cN_2$ is:
\begin{equation}
\langle A \rangle
\widesim{N\rightarrow \infty}
a^2\, g\,\frac{\partial}{\partial g} \log E_{0,g}
\widesim{g\rightarrow g_c}
a^2\left(1 - \frac{g}{g_c}\right)^{-1}  \;.
\end{equation}
So tuning:
\begin{equation}
\label{eq:mmLtuning}
g\rightarrow g_c\;, \quad a\rightarrow 0\;, \quad \textrm{where}\quad a^2\left(1 - \frac{g}{g_c}\right)^{-1}  = 1/\Lambda_R\;,
\end{equation}
where $\Lambda_R$ is a renormalized cosmological constant.

The double scaling limit \eqref{eq:mmDouble}, in this perspective, has the advantage of taking into account all 2d topologies at the quantum level, but also of accessing the regime of finite Newton's constant. Indeed, it gives a renormalized constant:
\begin{equation}
\label{eq:mmGR}
\frac1{G_R} \equiv 8\log\kappa = \frac{1}{G} + 8\log \left(1 - \frac{g}{g_c}\right)^{2 - \gamma}  .
\end{equation}

\section{Tensor model essentials}
\label{sec:tensoress}

We now review the basic definitions and properties of \iid~ tensor models, their $1/N$ expansion, and the mathematical tools that are used to analyze the combinatorics and topology of their Feynman graphs.

\subsection{{\sc iid} model}
\label{ssec:iid}

Consider $D+1$ complex rank--$D$ tensors: $\phi^{i}_{n_i}$, where $i\in\{0,\dots, D\}$ is the \textit{color} of the tensor. Moreover,  each subscript $n_i$ is actually an abbreviation of the form $n_i = (n_{ii-1}, \dots, n_{i0}, n_{iD}, \dots, n_{ii+1})$, where each $n_{ij}\in\{1,\dots, N\}$ for some $N$.   The $\mathbf{(D+1)}$\textbf{--colored {\sc\iid} model} is defined by the partition function:
\begin{equation}
\label{eq:iid-pf}
Z_{N,\lambda\bar\lambda} = \int [d\phi\, d\bar\phi]\; e^{-S(\phi,\bar\phi)}\;,
\end{equation}
where $[d\phi\, d\bar\phi]$ is the Gaussian--normalized measure on each of the $(D+1)N^D$ tensor components\footnote{By Gaussian--normalized, we mean:
\be
\int [d\phi \,d\bar \phi]\; \exp\left(-\sum_{i=0}^D \sum_{n} \phi_{n_i}^i\bar\phi_{n_i}^i \right) = 1,
\ee
}
and:
\be
\label{eq:iid-action}
S(\phi,\bar\phi) = \sum_{i= 0}^D\sum_{{n}_i} \phi^i_{{n}_i} \, \bar\phi^i_{{n}_i} + \frac{\lambda}{N^{D(D-1)/4}} \sum_{{n}}\prod_{i=0}^D \phi^i_{{n}_i} +  \frac{\bar\lambda}{N^{D(D-1)/4}}\sum_{{n}}\prod_{i=0}^D \bar\phi^i_{{n}_i} \,.
\ee
It is dependent on three parameters: $\{N,\,\lambda,\,\bar\lambda\}$; the size $N$ and two coupling constants.  In the interaction terms, $\sum_{{n}}$ denotes the sum over all indices $n_{ij}$, subject to the condition that $n_{ij} = n_{ji}$. Thus, each tensor shares one argument pairwise with each of the other $D$ tensors. Let us remark briefly that this colored \iid\ model, defined in terms of $D+1$ complex tensors with simplicial interaction\footnote{The pairing of indices mimics the gluing of $(D-1)$--simplices across common faces to form a $D$--simplex.}, is equivalent to a tensor model for a single tensor.  This equivalence may be directly constructed via successive integration of all but one the tensors within the partition function of the colored simplicial model above and leads to an effective action for the remaining tensor that contains an infinite number of $U(N)^D$-invariant interactions, whose respective coupling constants are precise monomials of $\lambda\bar\lambda$ (see \cite{colorless} for details).

\subsection{1/N--expansion}
\label{ssec:expansion}

 One recognizes immediately that expressions such as \eqref{eq:iid-pf} are not na\"ively integrable, so one performs a Taylor expansion of the integrand with respect to the coupling constants, $\lambda$ and $\bar\lambda$, to obtain more manageable quantities.  One evaluates the resulting Gaussian integrals via Wick contraction. The result is summarized in a Feynman expansion as:
 \be
\label{eq:pf-expansion}
Z_{N,g} = N^D\sum_{\cG}\frac{g^p}{\sym(\cG)} N^{ -\frac{2}{(D-1)!}\omega(\cG)}, \qquad\qquad
\begin{array}{rcl}
g &=& \lambda\bar\lambda,\\[0.1cm]
p &=& \frac{|\cV_\cG|}{2},\\[0.1cm]
\omega(\cG) & = & \frac{(D-1)!}{2} \left(D + \frac{D(D-1)}{2} p - |\cF_\cG|\right),
\end{array}
\ee
 where the Feynman graphs $\cG$ label the pattern of contractions, $\textsc{sym}(G)$ is a symmetry factor, $\omega(\cG)$ is the \textbf{degree of divergence}, while $|\cV_\cG|$ and $|\cF_\cG|$ are the number of vertices and faces in $\cG$, respectively.

There are two of remarks to be made at this stage:
\begin{description}
\item[--] The Feynman graphs $\cG$ are closed $\mathbf{(D+1)}$\textbf{--colored graphs}.  This coloring allows to encode topological information, in such a way that such graphs are topologically dual to abstract simplicial D-dimensional pseudomanifolds. We will give more details on the definition and properties of colored graphs in the following. Thus, the partition function is a weighted sum over such objects.

\item[--] The degree is a non--negative graph--dependent integer. Therefore, graphs may be ordered according to their degree and since it is bounded from below by zero, it makes sense to consider a \textbf{$\mathbf{1/N}$--expansion}.

\end{description}
In fact, one can re--organize the graphs as:
\begin{equation}
\label{eq:reorganize}
Z_{N,g} = N^D  \sum_{\omega}\sum_{p}Z_{\omega,p} \, N^{-\frac{2}{(D-1)!}\omega} g^p\;,
\qquad\qquad\textrm{where}\qquad\qquad
Z_{\omega,p} = \sum_{\cG\;:\; \substack{ \omega(\cG) = \omega\\[0.05cm]  |\cV_\cG| = 2p} } \frac{1}{\sym(\cG)}\;.
\end{equation}
Calculating the coefficients $Z_{\omega,p}$ allows one to extract the critical behaviour of the series:
\begin{equation}
\label{eq:critical-series}
Z_{\omega, g} = \sum_{p} Z_{\omega,p} \, g^p\;,
\end{equation}
that is, the behaviour of the partition function at a given order in the $1/N$--expansion. With this in mind, one must label and enumerate the graphs at the order of interest, which in turn requires a more detailed examination of the $(D+1)$--colored graphs, to which we now turn.

\subsection{Essentials of (D+1)--colored graphs}
\label{ssec:graph-theory}

In this section, we present an intuitive description of some basic features of $(D+1)$--colored graphs.  For more technically precise definitions, we refer the reader to \cite{review}.

\begin{description}
\item[(D+1)--colored graphs:] A $(D+1)$--colored graph is a graph comprising of $(D+1)$-valent vertices, such that any given vertex is colored either black or white, and each of its $D+1$ incident edges is distinctly colored from the set $\{0, \dots, D\}$.  Moreover, the vertices are connected so that black vertices have only white neighbours and vice versa. An example is provided in Figure \ref{fig:colored_graph}.

\begin{figure}[htb]
\centering
\includegraphics[scale = 0.7]{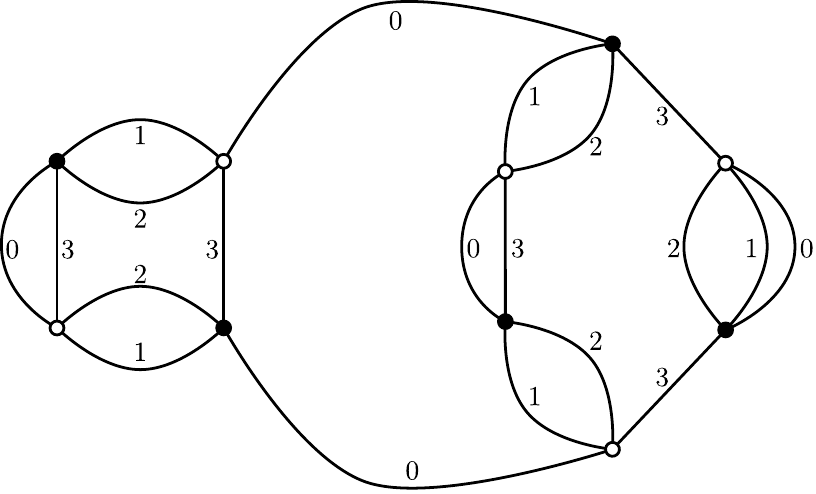}
\caption{\label{fig:colored_graph}A $(D+1)$--colored graph, with $D = 3$.}
\end{figure}

Importantly for applications to quantum theories of gravity, these colored graphs are topologically dual to $D$--dimensional abstract simplicial pseudo--manifolds.

\item[k--bubbles:] One identifies the \textbf{k--bubbles of species }$\mathbf{\{i_0,\dots,i_{k-1}\}}$ as the maximally connected subgraphs containing the $k$ distinct colors: $\{i_0,\dots, i_{k-1}\}\subset \{0,\dots, D\}$. In an obvious fashion, $k$--bubbles are nested within $(k+1)$--bubbles and so forth.  More subtly, the $k$--bubbles are dual to the $(D-k)$--dimensional simplices in the associated simplicial complex, while the nesting relations encode how these simplices are glued together.  Some $k$--bubbles of Figure \ref{fig:colored_graph} are identified in Figure \ref{fig:colored_bubbles}.

\begin{figure}[htb]
\centering
\includegraphics[scale = 0.7]{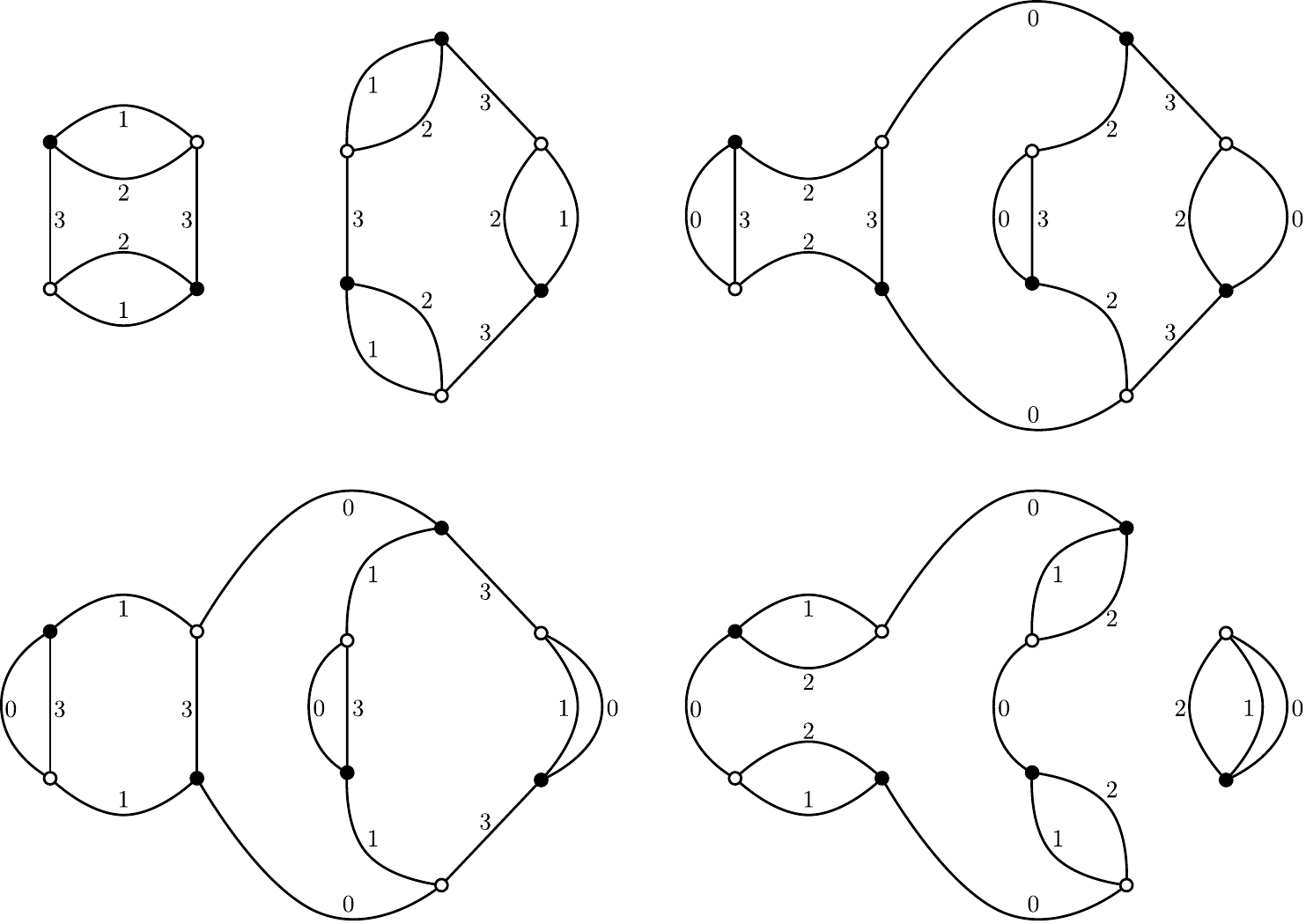}
\caption{\label{fig:colored_bubbles}The 3--bubbles of species $\{\widehat 0\}$, $\{\widehat 1\}$, $\{\widehat 2\}$ and $\{\widehat 3\}$ (clockwise).}
\end{figure}

In particular, note that the faces are the 2--bubbles. We shall often use the notation: $\{\widehat i_0, \dots,\widehat i_{k-1}\} = \{0,\dots, D\}\backslash \{i_0,\dots,i_{k-1}\}$.

It is possible to express the degree of a graph in terms of the degree of its $(k+1)$--bubbles:
\begin{equation}
\label{eq:degreeKbubble}
\omega(\cG) =\frac{D!}{2} +  \frac{(D+1)!}{2}\left(\frac{1}{k+1} - \frac{1}{D+1}\right)p - \frac{k!(D-k)!}{2} B^{[k+1]} + \sum_{(i_0\dots i_k;\rho)} (D-k)!\; \omega(\cB_{(i_0\dots i_k;\tau)}) \;,
\end{equation}
where $(i_0\dots i_k;\rho)$ labels distinct $(k+1)$-bubbles in $\cG$ and $B^{[k+1]}$ denotes the total number of $(k+1)$-bubbles in $\cG$.
which in the case of its $D$--bubbles, reduces to the following relation:
\begin{equation}
\label{eq:degreeDbubble}
\omega(\cG) = \frac{(D-1)!}{2}\left(p + D - B^{[D]}\right) + \sum_{(\hat i;\rho)} \omega(\cB_{(\hat i;\rho)})\;.
\end{equation}

\item[k--dipoles:]  One wishes to catalogue graphs. For colored graphs, the key tool to do so is a class of combinatorial moves that have a well--controlled effect on bubble structure.  These transformations are known as \textbf{k--dipole move}s.  A $k$--dipole move of species $\{i_0,\dots,i_{k-1}\}$ is illustrated in Figure \ref{fig:k-dipole}. As one can see, there are actually two types of dipole moves, \textbf{dipole creation} and \textbf{dipole annihilation}, one being the inverse of the other. A $k$--dipole annihilation consists, roughly speaking, in the removal of $k$ lines connecting a white and a black vertex, together with the vertices themselves, while joining the remaining $D+1-k$ lines. There is one condition that must be satisfied by the $k$ edges of colors $i_1,\dots,i_k$ on the right of Figure \ref{fig:k-dipole} -- they should separate two distinct $(D+1-k)$--bubbles of species $\{\widehat i_0, \dots,\widehat i_{k-1}\}$.  Thus, $k$--dipole creation (annihilation) increases (resp. decreases) the number of $(D+1-k)$--bubbles of species $\{\widehat i_0,\dots, \widehat i_{k-1} \}$ by 1.

\begin{figure}[h]
\centering
\includegraphics[scale=0.9]{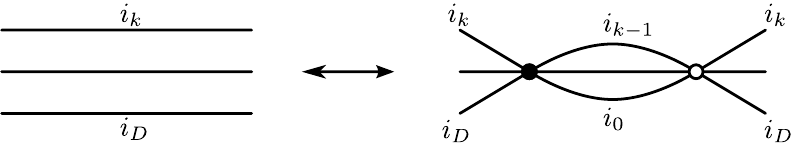}
\caption{\label{fig:k-dipole} The k--dipole moves of species $\{i_0,\dots, i_{k-1}\}$.}
\end{figure}

Additionally, if the $(D+1-k)$--bubble added (removed) is a $(D-k)$--sphere, then the $k$--dipole implements a homeomorphism on the associated topological space.

Then, consider the following example where $\cG$ possesses a $k$--dipole of some species and the graph resulting from the annihilation of this dipole is denoted by $\cG\backslash d_k$.  Their respective degrees are related by:
\begin{equation}
\label{eq:degreedipole}
\omega(\cG) = \omega(\cG\backslash d_k) + \frac{(D-1)!}{2}(k - 1)(D - k)\;.
\end{equation}
Importantly, for both $k =1$ and $k=D$ the degree is unchanged.

\item[1--dipoles and core graph equivalence classes:]  Given that 1--dipole moves preserve the degree, it is perhaps unsurprising that they play a special role in cataloguing Feynman graphs of the \iid\ model.  One can partition the graphs with a given degree into equivalence classes, where the 1--dipole moves constitute the equivalence relation. In other words, two graphs are in the same equivalence class if they are related by a sequence of 1--dipole moves (both of creation and annihilation type).   Furthermore, it emerges that these equivalence classes come equipped with convenient representatives, known as \textbf{core graphs} -- those members of the class from which no more 1--dipoles can be annihilated. In general, there are several core graphs within a particular equivalence class.  However, this does not pose a problem.  One simply picks one such graph for each equivalence class.  The rest of the graphs in the class are generated by performing arbitrary sequences of 1--dipole moves on this core graph.

One would also like to label each graph in the equivalence class uniquely, so that the coefficients in \eqref{eq:reorganize} can be computed.  This turns out to be tricky and, unfortunately, the sequences of 1--dipole moves mentioned a moment ago are not the best tool to achieve this goal. The reason is that often there are several distinct sequences that transform a representative core graph to the same graph in the equivalence class. The next section details the solution to this problem in the leading order and next--to--leading order sectors.

\item[Jackets:] Jackets are the name given to a certain class of 2--dimensional surfaces embedded within the $D$--dimensional topological manifold.  They are encoded via a $(D+1)$--cycle $\sigma$ of the set $\{0,\dots,D\}$.  The surface is constructed from the cycle as follows.  Consider a graph $\mathcal{G}$ and a planar projection of the neighbourhood of each black vertex such that the incident colored edges are ordered clockwise around the vertex according to the cycles $\sigma$.  For white vertices, the cycle determines the anti--clockwise ordering.  Since a $(D+1)$--colored graph is not generally planar, away from the vertices the edges cross and the graph is embedded in a surface of non--zero genus.  This surface is provided by  $\mathcal{G}$ itself,  comprising of the totality of its vertices and edges, along with all the faces whose two colors are adjacent in the cycle $\sigma$.

There are $D!/2$ distinct jackets $\mathcal{J}$ in a graph.\footnote{Although there are $D!$ distinct cycles for a set with $D+1$ elements, reversing the cycle does not does not generate a different surface.}
 As a result, the degree may be re--expressed as a sum over the genera of the jackets:\footnote{For an orientable surface: $h_{\cJ} = 1 - (|\cV_\cJ| - |\cE_\cJ| + |\cF_\cJ|)/2$, where  $|\cV_\cJ|$, $|\cE_\cJ|$, $|\cF_\cJ|$ are respectively the vertices, edges and faces of the jacket $\cJ$.}
\begin{equation}
\label{eq:degreejacket}
\omega(\cG) = \sum_{\cJ} h_{\cJ}\;.
\end{equation}

\item[Gravitational interpretation:]

As in the 2d case outlined in the introduction, the amplitudes have a gravitational interpretation in terms of the Regge action evaluated on equilateral triangulations\footnote{In proper TGFTs, on the other hand, thanks to their richer set of data, the correspondence can be improved to give generic simplicial path integrals for discrete (1st order) gravity actions, with generic assignment of geometric variables (areas of triangles, holonomies of discrete gravity connections, etc), in turn dual to spin foam models \cite{GFT1,GFT2, GFT-EPRL, EPRL, AristideDaniele}.}
, and the sum over graphs can thus be put in correspondence with the definition of quantum gravity suggested by the Euclidean Dynamical Triangulations approach. Defining:
\begin{equation}
\label{eq:ttbare}
\log N = \frac{\cV_{D-2}}{8G}\;,\quad \log g =  \frac{D}{16\pi G}\cV_{D-2}\left((D-1)\pi - (D+1)\arccos \frac{1}{D}\right) - 2\cV_{D}\Lambda \;,
\end{equation}
one may recast the weights as:
\begin{equation}
\label{eq:ttEdt}
\begin{split}
g^{|\cV_{\cG}|/2}\; N^{D - \frac{2}{(D-1)!}\omega(\cG)} &=
\exp\left[- \cV_{D}\,\Lambda\, \cN_D + \frac{1}{16\pi G}\left(2\pi \cV_{D-2}\, \cN_{D-2} - \frac{D(D+1)}{2}\cV_{D-2}\,\cN_{D}\arccos \frac{1}{D}\right)\right]\\
 &= e^{-S_{G,\Lambda,a}(\cN_{D-2}, \cN_D)}\;.
 \end{split}
\end{equation}
where $\cN_D = |\cV_\cG|$ and $\cN_{D-1} = |\cF_G|$ are the numbers of $D$-- and $(D-2)$--simplices respectively in the triangulation represented by $\cG$, while $\cV_k = (a^k/k!)\sqrt{(k+1)/2^k}$ is the volume of an equilateral $k$--simplex with edge--length $a$. Once again, this is the action prescribed by the EDT approach. The large--$N$ limit corresponds to the vanishing of (the bare) Newton's constant: $G \rightarrow 0$, while tuning the coupling constant to its critical value lead to a regime whose behaviour is controlled by $D$--complexes with increasingly large numbers of $D$--simplices. As in the 2d case, a double scaling limit would then allow not only to include a more general class of triangulations in the sum (although in this case it may not allow to go beyond spherical topology), but also to probe the regime corresponding to finite Newton's constant.
\end{description}

\subsection{Observables}
\label{ssec:observables}

Rather than deal with the partition function directly, two other observables are studied in this paper. First, the free energy is defined as:
\begin{equation}
\label{eq:free-energy}
E_{N,g} = \frac{1}{N^D} \log Z_{N,g}\;,
\end{equation}
and its contributions come from {\it connected} closed $(D+1)$--colored graphs.
Meanwhile, the connected 2--point function is defined as:
\begin{equation}
\label{eq:2-point-function}
\langle \phi^i_{m}\,\bar \phi^i_{\bar m}\rangle_c = \frac{1}{Z_{N,g}}\int [d\phi\,d\bar\phi]\; \phi^i_m \,\bar \phi^i_{\bar m}\; e^{-S(\phi,\bar\phi)}\;,
\end{equation}
and its contributions come from connected $(D+1)$--colored graphs with two external edges of color $i$. Since all connected closed $(D+1)$--colored graphs are also 1--particle irreducible (\onepi), cutting a single edge of color $i$ within \textit{any} connected closed graph gives a connected 2--point graph.
Moreover, for a closed graph $\cG$ scaling like $N^{D - \frac{2}{(D-1)!}\omega(\cG)}$, its associated 2--point graph, $\widetilde \cG$, scales like $N^{-\frac{2}{(D-1)!}\omega(\cG)}$. In other words, all graphs get rescaled by the same factor $N^{-D}$.  Thus, a 2--point graph contributes at a certain order to the 2--point function if and only if its associated closed graph contributes at that order to the free energy.

Due to index conservation, the connected 2--point function may be factorised as:
\begin{equation}
\label{eq:factorise-2-point}
\langle \phi^i_{m}\,\bar \phi^i_{\bar m}\rangle_c =G_{N,g} \, \delta_{m\bar m}\;,
\end{equation}
where the factor $G_{N,g}$ is independent of the color of the external edges.
Indeed, there exists a Schwinger--Dyson equation relating these observables:\footnote{The appropriate Schwinger--Dyson equation is:
\begin{equation}
\label{eq:sd-equation-proto}
0 = \int [d\phi\,d\bar\phi] \frac{\partial}{\partial \phi^i_{\bar m}} \left( \phi^i_m\, e^{-S(\phi,\bar\phi)}\right)\; \nonumber.
\end{equation}
}
\begin{equation}
\label{eq:2pointSD}
G_{N,g} = 1 + g\frac{\partial}{\partial g} E_{N,g}\;.
\end{equation}
Thus, the behaviour of the free energy is directly and easily related to that of the connected 2--point function.  This is a very useful property for the analysis to be detailed below, since connected 2--point graphs are more easily catalogued than closed graphs.

\section{Graphs}
\label{sec:graphs}

We now present the analysis of the combinatorial structure of the graphs appearing in the perturbative expansion \ref{eq:pf-expansion}, in the $1/N$ expansion, at both leading and next-to-leading order. We focus on the main steps of the analysis and on the results, leaving the detailed proofs to the appendix.

\subsection{Leading order}
\label{ssec:lo}

We shall just state the results obtained in \cite{critical}.  One finds that the leading order in the $1/N$--expansion is specified by: $\omega(\cG)=0$. Thus, the pertinent coefficients are the:
\begin{equation}
\label{eq:lo-coeff}
G_{\lo, p} := G_{0,p} = \sum_{\widetilde\cG\;:\; \substack{\omega(\cG) = 0\\[0.05cm] |\cV_{\widetilde \cG}| = 2p}} \frac{1}{\sym(\widetilde\cG)}.
\end{equation}
where $\cG$ is the closed graph obtained from the 2--point graph $\widetilde \cG$ by joining its two external lines of color $0$.

For the leading order closed graphs, there is a single equivalence class with a unique core graph called the \textbf{supermelon}. It is illustrated in Figure \ref{fig:supermelon}.

\begin{figure}[htb]
\centering
\includegraphics[scale = 0.9]{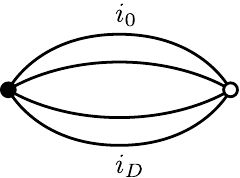}
\caption{\label{fig:supermelon} The supermelon core graph, denoted $\cG_{\supermelon}$.}
\end{figure}

Then, one turns to the 2--point graphs.  The graphs occurring at leading order are known as \textbf{rooted melonic graph}s.  The fundamental building blocks of any rooted melonic graph are the \textbf{elementary melon}s, illustrated in Figure \ref{fig:elmelon}.

\begin{figure}[htb]
\centering
\includegraphics[scale=0.9]{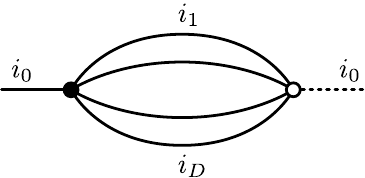}
\caption{\label{fig:elmelon} An elementary melon of species $i_0$.  }
\end{figure}

 Such a melon consists of two vertices sharing $D$ edges. Both vertices have one external edge.  Obviously, both external edges possess the same color, say $i_0$. Thus, one refers to such an object as an elementary melon of color $i_0$. Moreover, an elementary melon has two distinguished features: \textit{i}) an external edge of color $i_0$ incident to the white vertex, which is known as the \textbf{inactive edge}; \textit{ii})  $D+1$ edges incident at the black vertex, which are known as \textbf{active edges}.

The set of all rooted melonic graphs, denoted by $M$, is the union of the subsets $M_p$ containing rooted melonic graphs with $2p$ vertices:
\begin{equation}
\label{eq:melonUnion}
M = \bigcup_{p \geq 1} M_p.
\end{equation}
One may define the elements of $M_p$ as follows:
\begin{description}
\item[p = 1:]  There are only $D+1$ rooted melonic graphs in $M_1$, the elementary melons illustrated in Figure \ref{fig:elmelon} for different choices of  $i_0\in\{0,\dots,D\}$.

\begin{figure}[H]
\centering
\includegraphics[scale=0.8]{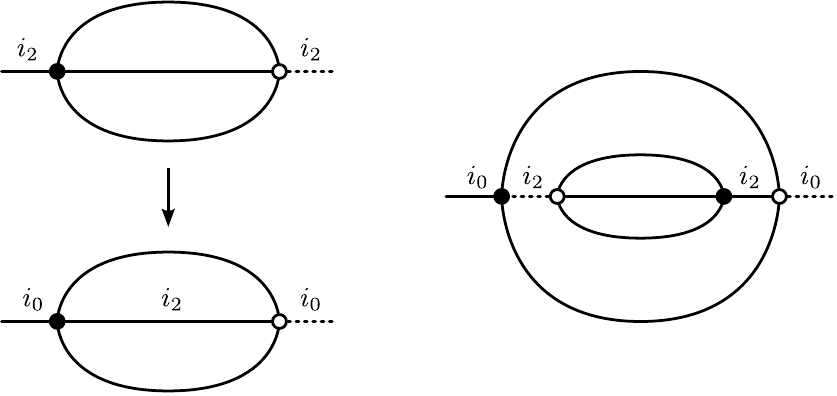}
\caption{\label{fig:elMelonInsertion} An elementary melon of color $2$ inserted along the active edge of color $2$ (for $D=3$). The active edges are drawn using full lines.}
\end{figure}

\item[$p=2$:]  One obtains the graphs in $M_2$ from the graphs in $M_1$ by
replacing an active edge of a given color by an elementary melon of the same color, as shown in Figure \ref{fig:elMelonInsertion}. (In fact, this is $D$--dipole creation.)

\item[$p=k$:] One obtains the graphs in $M_p$ from those in $M_{p-1}$ by replacing some active edge by an elementary melon.

\end{description}

For a graph occurring in $M_p$, the initial combinatorial factor coming from the Taylor expansion is $1/p!$, while the graph is obtained from exactly $p!$ Wick contractions. Thus, the final combinatorial factor is $\sym(\widetilde\cG) = 1$.   As a result, the problem of calculating the coefficients $G_{\lo,p}$ has been reduced to the enumeration of distinct patterns of melonic insertions.

\noindent\textbf{Remark:} A generic leading order 2--point graph is obtained from a leading order 2--point graph at $p =1$ by performing some sequence of 1--dipole moves, just as a generic leading order closed graph is obtained from $\cG_{\supermelon}$ by some sequence of 1--dipole moves.   However, in our definition of rooted melonic graphs, we have only talked about inserting melons within melons.  This stems from the result, proven in \cite{critical}, that: \textit{The set of rooted melonic graphs $M$ is closed under 1--dipole creation and annihilation.}


\subsection{Next--to--leading order: statement of the results}
\label{ssec:nlo}

In this section, we present the results of our analysis of the next--to--leading order sector.  One can anticipate its components: \textit{i}) the identification of \nlo\ core graphs; \textit{ii}) an iterative procedure to generate all graphs at that order, starting from the core graphs. For the technical aspects, we refer the reader to Appendix \ref{app:nloTech} where the precise statements are laid out and proven.

Our first main result concerns the core graphs:
\begin{proposition}
\label{prop:core}
Consider the \iid\ tensor model with $D\geq 3$. The graphs $\cG_{\twodipole}$ (seen in Figure \ref{fig:core-graphs-nlo}) are the \nlo\ core graphs.

\begin{figure}[htb]
\centering
\includegraphics[scale = 0.9]{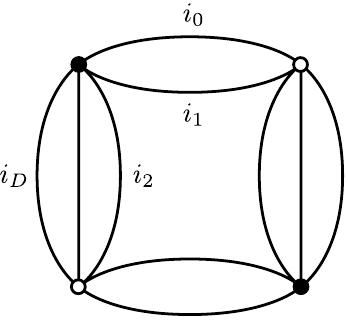}
\caption{\label{fig:core-graphs-nlo} The core graphs at \nlo, denoted collectively by $\cG_{\twodipole}$.}
\end{figure}

\end{proposition}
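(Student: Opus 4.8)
The plan is to classify the degree--$\omega_{\nlo}$ graphs up to $1$--dipole moves, where $\omega_{\nlo}$ denotes the smallest strictly positive value attained by the degree, and then to isolate the core representatives (those admitting no further $1$--dipole annihilation). Since $1$--dipole moves preserve the degree, by \eqref{eq:degreedipole}, every degree--$\omega_{\nlo}$ equivalence class possesses at least one core graph, so it suffices to characterise these. The two essential tools are the jacket expansion \eqref{eq:degreejacket}, $\omega(\cG) = \sum_{\cJ} h_{\cJ}$, which exhibits $\omega$ as a sum of non--negative integers, and the $D$--bubble expansion \eqref{eq:degreeDbubble}, which splits $\omega$ into a ``global'' term $\tfrac{(D-1)!}{2}(p + D - B^{[D]})$ plus a sum of the degrees of the $D$--bubbles, each of which is itself the degree of a $D$--coloured graph and hence non--negative.

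First I would pin down $\omega_{\nlo}$. The dipole formula \eqref{eq:degreedipole} shows that the cheapest way to leave the melonic sector through a single $k$--dipole is at $k=2$ (or, symmetrically, $k=D-1$), raising the degree by $\tfrac{(D-1)!}{2}(k-1)(D-k) = \tfrac{(D-1)!}{2}(D-2)$. I would therefore conjecture $\omega_{\nlo} = \tfrac{(D-1)!}{2}(D-2)$ and establish it in two halves. Realisability is immediate: a melonic graph carrying a single $2$--dipole, as in Figure \ref{fig:core-graphs-nlo}, attains this value. The lower bound --- that no graph has degree strictly between $0$ and $\tfrac{(D-1)!}{2}(D-2)$ --- is the substantive point. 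Here I would argue from the jacket expansion: a graph with $0 < \omega < \omega_{\nlo}$ would have only a handful of genus--one jackets with the rest planar, and I would show that such a configuration is incompatible with the coherence constraints tying all $D!/2$ jackets to a common underlying coloured graph, so that any departure from planarity must be ``shared'' across a full $2$--dipole's worth of jackets.

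With $\omega_{\nlo}$ fixed, I would classify the core graphs at that degree. The crucial finiteness input is a bound on the size of a core graph in terms of its degree: because a core graph admits no annihilable $1$--dipole, it cannot carry superfluous melonic decoration, and the $D$--bubble expansion \eqref{eq:degreeDbubble} then constrains $p$, $B^{[D]}$ and the bubble degrees tightly enough that only finitely many coloured graphs survive at $\omega = \omega_{\nlo}$. Within this finite list I would verify directly that each core graph consists of an otherwise melonic skeleton supporting exactly one $2$--dipole, i.e. is one of the $\cG_{\twodipole}$ of Figure \ref{fig:core-graphs-nlo}, and conversely that each $\cG_{\twodipole}$ is genuinely $1$--dipole irreducible.

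I expect the lower--bound/gap step to be the main obstacle. The natural inductive attack --- bounding $\omega$ from below through the degrees of the $D$--bubbles via \eqref{eq:degreeDbubble} --- does not close, because a single non--melonic $D$--bubble (a coloured graph in one lower dimension) carries, by the inductive hypothesis, minimal positive degree $\tfrac{(D-2)!}{2}(D-3)$, which is strictly smaller than $\tfrac{(D-1)!}{2}(D-2)$ for $D \geq 4$. One must therefore show that such a lone defective bubble cannot occur without simultaneously forcing the global term $p + D - B^{[D]}$ to be positive, or forcing several other bubbles to be defective. Controlling this interplay between the global term and the bubble degrees --- equivalently, proving that a single genus--one jacket cannot be isolated once $D \geq 4$ --- is where the genuine combinatorial work lies; for $D=3$ the question degenerates, since then $\tfrac{(D-1)!}{2}(D-2) = 1$ and the gap reduces to the trivial one between consecutive integers.
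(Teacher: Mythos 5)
Your outline correctly locates the target value $\omega_{\nlo}=\tfrac{(D-1)!}{2}(D-2)$, the two relevant decompositions \eqref{eq:degreejacket} and \eqref{eq:degreeDbubble}, and even the precise point where the naive induction fails: a single non--melonic $D$--bubble only costs $\tfrac{(D-2)!}{2}(D-3)$, which is far too cheap to produce the gap for $D\geq 4$. But you then stop exactly there, remarking that ``controlling this interplay \dots\ is where the genuine combinatorial work lies.'' That is the gap: the proposal diagnoses the obstacle without supplying the idea that removes it, so the lower--bound step --- which you yourself identify as the substantive content of the proposition --- is not actually proved.

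The paper's resolution is a structural rigidity statement about melonic $D$--bubbles, not a refined counting of jacket genera. Lemma \ref{lem:2-melonic} and Corollary \ref{cor:3-melonic} show (using the uniqueness of the melonic vertex pairing, property (\textbf{P2}), and the existence of a planar jacket) that a core graph with three or more melonic $D$--bubbles is already the supermelon, and one with exactly two is the supermelon with 2--dipoles of a single species inserted. Consequently a putative sub--\nlo\ core graph must have at least $D-1$ \emph{non--melonic} $D$--bubbles simultaneously; feeding the inductive bound $\omega(\up{D}\cR)\geq\tfrac{(D-2)!}{2}(D-3)$ into \eqref{eq:degreeDbubble} for all $D-1$ of them, together with the global term $\tfrac{(D-1)!}{2}(p-1)$ and $p\geq 2$, yields $\omega\geq\tfrac{(D-1)!}{2}(p+D-4)$, which meets $\tfrac{(D-1)!}{2}(D-2)$ only at $p=2$ with the remaining two bubbles melonic --- and Lemma \ref{lem:2-melonic} then identifies the graph as $\cG_{\twodipole}$. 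This ``you cannot have only one defective bubble'' mechanism is precisely the missing ingredient in your plan, and your alternative suggestion of showing that a single genus--one jacket cannot be isolated is not obviously equivalent (at \nlo\ the degree is distributed over many non--planar jackets for $D\geq 4$) and is not developed. Your second stage (``verify directly within this finite list'') also leans on an unproved classification of the $p=2$ core graphs for arbitrary $D$, which in the paper is again Lemma \ref{lem:2-melonic} rather than a finite check. The base case $D=3$ and the realisability half of your argument are fine and match the paper.
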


In the \nlo\ sector, there are ${D+1 \choose 2}$ core graphs, all of the form given in Figure \ref{fig:core-graphs-nlo}. Note that they may be obtained  from the supermelon graph by creating a single 2--dipole, illustrated in Figure \ref{fig:twoDipole}, of which there are ${D+1 \choose 2}$ distinct species.

\begin{figure}[htb]
\centering
\includegraphics[scale=0.8]{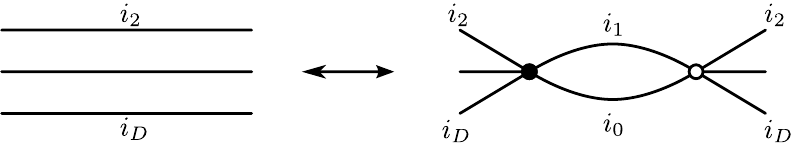}
\caption{\label{fig:twoDipole} A 2--dipole of species $\{i_0i_1\}$.}
\end{figure}

It emerges that the creation of the first 1--dipole in the \nlo\ core graphs is equivalent to the insertion of an elementary melon.  However, the creation of a second 1--dipole has two possible effects: \textit{i}) it may again be equivalent to the insertion of an elementary melon or \textit{ii}) it may  produce a graph of the form illustrated in Figure \ref{fig:extension}.

\begin{figure}[htb]
\centering
\includegraphics[scale=0.8]{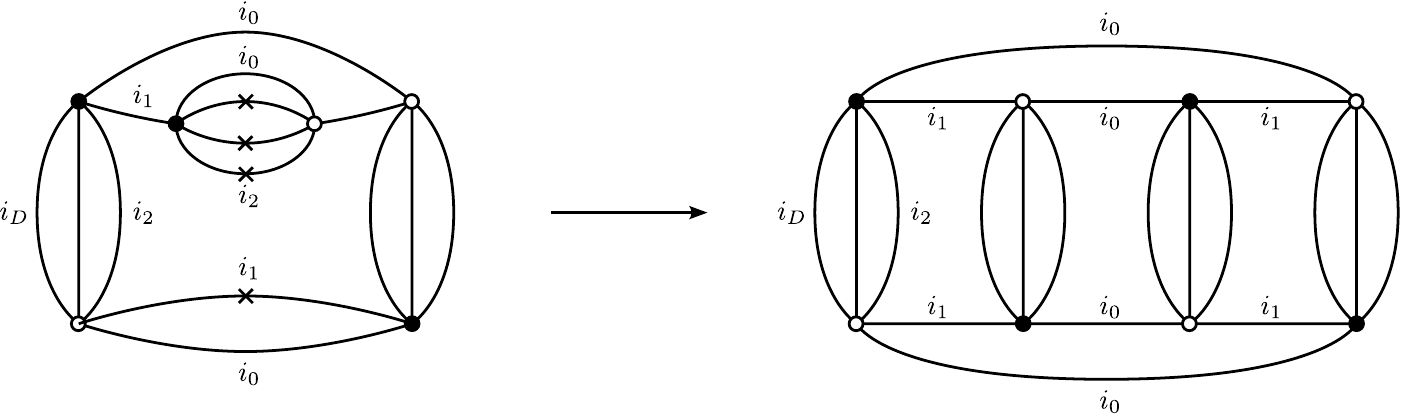}
\caption{\label{fig:extension} The creation of a second 1--dipole, of species $\{i_0\}$, which is not equivalent to the insertion of an elementary melon. The edges contributing to the 1--dipole are tagged.}
\end{figure}

One may iterate this procedure to arrive at graphs of the form drawn in Figure \ref{fig:extensionG}. Note that the graphs produced by $\ell-1$ iterations of this procedure have two faces of species $\{i_0i_1\}$, each with $2\ell$ edges. We shall denote such graphs by $\cG_{\twodipole,\ell}\;$, in which case, $\cG_{\twodipole,1} \equiv \cG_{\twodipole}$.

\begin{figure}[htb]
\centering
\includegraphics[scale=0.8]{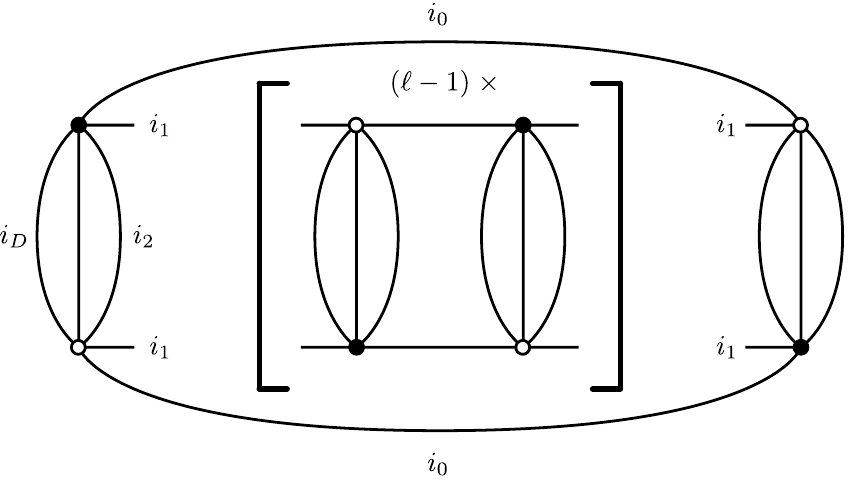}
\caption{\label{fig:extensionG} The graphs $\cG_{\twodipole,\ell}$.}
\end{figure}

\begin{definition}
\label{def:closedgraphs}
One denotes by $S_{\ell,n}$, the set of closed graphs derived from $\cG_{\twodipole,\ell}$ by inserting arbitrary combinations of $n$ elementary melons.
\end{definition}

\begin{proposition}
\label{prop:nlographsClosed}
The set of graphs:
\begin{equation}
S = \bigcup_{\substack{\ell \geq 1\\[0.05cm] n\geq 0}} S_{\ell,n}
\end{equation}
is closed under 1--dipole creation and annihilation.
\end{proposition}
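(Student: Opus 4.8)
The plan is to establish closure by a complete case analysis of the 1--dipole moves available on a generic element of $S_{\ell,n}$, leaning heavily on the leading--order fact, recalled in the Remark of Section \ref{ssec:lo}, that the set of rooted melonic graphs is already closed under such moves. By \eqref{eq:degreedipole} with $k=1$ the degree is preserved, so every move automatically keeps us at next--to--leading order; the real content of the proposition is therefore that we remain in the specific family $S$, and not in some other class of the same degree. I must check both creation and annihilation, and I would organise the argument around the core skeleton $\cG_{\twodipole,\ell}$ underlying each graph, canonically marked by its two faces of species $\{i_0i_1\}$, each carrying $2\ell$ edges. These distinguished faces are the invariant I would track throughout: a move keeps $\cG$ in $S$ precisely when its effect on them is one of the two controlled effects allowed by the construction.

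First I would fix a canonical decomposition of any $\cG \in S_{\ell,n}$ into the ladder core $\cG_{\twodipole,\ell}$ together with the $n$ elementary melons, the latter organised into rooted melonic subgraphs hanging from active edges in the sense of Section \ref{ssec:lo}. With this decomposition in hand, I would sort every 1--dipole of $\cG$ -- and, dually, every edge along which one may be created -- into two families according to how it meets the distinguished faces. The first family consists of moves equivalent to the insertion or removal of an elementary melon: these leave the two $\{i_0i_1\}$--faces, and hence $\ell$, untouched and merely change $n$ by one, so that closure follows immediately from the leading--order result applied to the affected melonic subgraph. The second family consists of moves whose spine is a rung of the ladder; these are exactly the extension and contraction moves of Figure \ref{fig:extension}, which lengthen or shorten both distinguished faces by two edges and send $\ell \to \ell \pm 1$. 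For this family I would verify, by inspecting the local colouring, that creation produces $\cG_{\twodipole,\ell+1}$ carrying the same melonic decoration, that annihilation produces $\cG_{\twodipole,\ell-1}$, and that the endpoint $\ell = 1$ matches a core graph of Proposition \ref{prop:core} (possibly decorated by a single melon). In either family the image lies in $S$.

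The hard part, and the step to which most of the work is devoted, is proving that these two families are exhaustive. The delicate configurations are the 1--dipoles straddling the interface between the ladder core and a melonic decoration, and those located at the two ends of the ladder, where the skeleton is most degenerate; a priori such a dipole could annihilate to a graph whose core skeleton is no longer of the form $\cG_{\twodipole,\ell'}$. I would control these by showing that any interface dipole, after a finite sequence of melon--preserving 1--dipole moves (legitimate by the leading--order closure), is conjugate either to a melon or to a genuine ladder rung, thereby reducing it to one of the two families above. The boundary cases $\ell = 1$ and $n = 0$, where the core and the decoration can overlap, I would check directly against the explicit core graphs of Figure \ref{fig:core-graphs-nlo}. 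Assembling these verifications -- which amount to a local analysis of the coloured neighbourhoods involved in each move -- gives the claimed closure; the full catalogue of cases is precisely what is carried out in Appendix \ref{app:nloTech}.
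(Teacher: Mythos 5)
Your overall architecture --- every 1--dipole move is either a melon insertion/removal handled by the leading--order closure, or a ladder extension/contraction changing $\ell$ by one --- matches the shape of the paper's case analysis, and your observation that the degree is automatically preserved so the real content is membership in $S$ is correct. But the step you yourself flag as the hard part, exhaustiveness, is where your argument has a genuine gap. You propose to track the two distinguished faces of species $\{i_0 i_1\}$ and to dispose of the awkward ``interface'' dipoles by showing they are ``conjugate, after a finite sequence of melon--preserving moves, to a melon or a ladder rung.'' This is not a proof strategy that closes: the proposition asserts that a \emph{single} 1--dipole move applied to a graph in $S$ lands in $S$, and conjugating the move by other moves does not tell you where that single move sends the graph unless you already know those auxiliary moves preserve $S$--membership --- which is the statement being proved. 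Nothing in your write-up supplies an a priori enumeration of \emph{all} possible 1--dipole creations on a graph in $S_{\ell,n}$.

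The paper closes this gap with a specific technical device you do not invoke. From the analysis of Section \ref{ssec:identifying}, every graph in $S_{\ell,n}$ has two melonic $D$--bubbles, which induce a unique melonic vertex pairing on the whole (non-melonic) graph. Lemma \ref{lem:separate} then guarantees that \emph{every} 1--dipole creation separates some melonic vertex pair, and Lemma \ref{lem:cutedges} (via the sets $\cE^{v\bar v}_{1\textrm{--dipole},i}$) pins down exactly which edges may be cut for each pair. This converts exhaustiveness into a finite, checkable list indexed by the type of vertex pair ($A\overline A$, $B_j\overline B_j$ for the various colour assignments, and pairs nested inside melonic decorations), with the ladder-extension move appearing precisely in the case $j=i_0$, $i=i_1$. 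Without the vertex-pairing machinery, or some equivalent complete classification of where a 1--dipole can sit, your two families are a plausible conjecture rather than a demonstrated dichotomy. You would also miss the $D=3$, $\ell=1$ subtlety (the core graph there has four melonic 3--bubbles and no canonical pairing), which the paper must treat separately.
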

As at leading order, it is easier to accurately count distinct \nlo\ connected 2--point graphs rather than closed graphs. Given that the next--to--leading order sector in the $1/N$--expansion is specified by: $\omega(\cG) = \omega(\cG_{\twodipole}) =\frac{(D-1)!}{2} (D-2)$, one finds that:
\begin{equation}
\label{eq:leading-order}
G_{\nlo,p} := G_{\frac{(D-1)!}{2} (D-2),p} =  \sum_{\widetilde \cG \;:\; \substack{\omega(\cG) = \frac{(D-1)!}{2} (D-2)\\[0.1cm]  |\cV_{\widetilde\cG}| = 2p}} \frac{1}{\sym(\widetilde\cG)}\;.
\end{equation}

There are a number elementary building blocks that are used to define 2--point graphs at this order.  Supplementing the elementary melons, there are the 2--point insertions obtained by cutting a edge of $\cG_{\twodipole,\ell}$ (with $\ell \geq 1$).  Depending on the edge cut, they take the forms illustrated in Figure \ref{fig:el2dipole} and we shall refer to them all as \textbf{elementary 2--dipole}s.  Note that, once again, the solid edges are active, while there is still just one inactive edge, marked by a dashed line.

\begin{figure}[htb]
\centering
\includegraphics[scale=0.8]{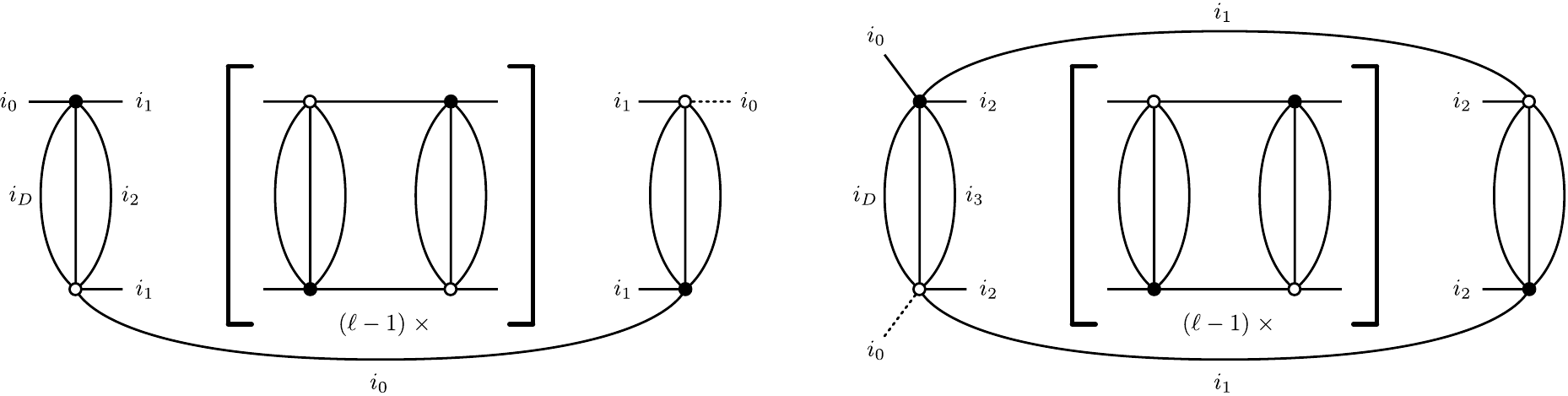}
\caption{\label{fig:el2dipole} The elementary 2--dipole insertions $\mathcal{T}_{\ell,0,0}$.}
\end{figure}

\begin{proposition}
\label{prop:nlographs}
For the \iid\ model, the set of \nlo\ connected 2--point graphs is:
\begin{equation}
T = \bigcup_{\substack{\ell \geq 1\\[0.05cm] m \geq 0\\[0.05cm] n
\geq 0}} T_{\ell,m,n}\;.\nonumber
\end{equation}
\end{proposition}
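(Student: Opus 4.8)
The plan is to reduce the enumeration of \nlo\ connected 2--point graphs to the already established classification of \nlo\ \emph{closed} graphs, and then to analyse the effect of cutting a single edge. By the discussion of Section~\ref{ssec:observables}, every connected closed $(D+1)$--colored graph is \onepi, so cutting one edge of a fixed color --- say color $0$, which is legitimate by the color--independence expressed in \eqref{eq:factorise-2-point} --- sends a connected closed graph $\cG$ to a connected 2--point graph $\widetilde\cG$ of the same degree, while closing the two external legs recovers a closed graph of the same degree. Hence the set of \nlo\ 2--point graphs is exactly the image, under the operation "cut one color--$0$ edge", of the set of \nlo\ closed graphs. By Proposition~\ref{prop:nlographsClosed} the latter is precisely $S = \bigcup_{\ell\geq 1,\,n\geq 0} S_{\ell,n}$, so it remains to describe all cuts of the graphs in $S_{\ell,n}$.

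First I would classify the color--$0$ edges of a graph $\cG\in S_{\ell,n}$ by their location. Such a graph is the skeleton $\cG_{\twodipole,\ell}$ of Figure~\ref{fig:extensionG} decorated by $n$ inserted elementary melons, so each edge is either (i) internal to one of the melonic insertions, or (ii) part of the skeleton $\cG_{\twodipole,\ell}$ itself. Cutting an edge of type (i) promotes the corresponding melon to carry the inactive external leg and leaves the remainder as melonic decoration, exactly as at leading order; cutting an edge of type (ii) opens the 2--dipole core into one of the elementary 2--dipole insertions $\cT_{\ell,0,0}$ of Figure~\ref{fig:el2dipole}. In either case the resulting 2--point graph consists of a single elementary 2--dipole $\cT_{\ell,\cdot,\cdot}$ together with a pattern of melonic insertions, which is precisely the defining data of the sets $T_{\ell,m,n}$.

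Next I would match the bookkeeping indices. The index $\ell$ records the common length $2\ell$ of the two faces of species $\{i_0 i_1\}$ and is untouched by the cut, while $m$ and $n$ record the melonic decorations along the two distinct families of active edges carried by the elementary 2--dipole. I would verify that every assignment of $(\ell,m,n)$ is realised by some cut of a graph in $S$, and conversely that closing the external legs of any graph in $T_{\ell,m,n}$ returns a graph in $S_{\ell,n'}$ for a suitable $n'$; together these give both inclusions between $T$ and the set of \nlo\ 2--point graphs. As at leading order, the Taylor and Wick prefactors combine to give $\sym(\widetilde\cG)=1$, so this procedure neither over-- nor under--counts.

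The main obstacle I anticipate is \emph{non--redundancy}: showing that the description of an \nlo\ 2--point graph as "one elementary 2--dipole decorated by $(m,n)$ melons" is unique, i.e.\ that distinct triples $(\ell,m,n)$ yield distinct graphs and that no graph admits two inequivalent such decompositions. This is the 2--point analogue of the leading--order fact that $M$ is closed under 1--dipole moves (the Remark of Section~\ref{ssec:lo}), and it is delicate precisely because, as flagged in the discussion of core--graph equivalence classes, different sequences of 1--dipole moves can produce the same graph. Concretely, I would prove that $T$ is closed under 1--dipole creation and annihilation and that each 1--dipole equivalence class of \nlo\ 2--point graphs contains a unique "cut core" representative, transporting the argument of \cite{critical} across the single 2--dipole defect; it is exactly that defect which forces the extra index and breaks enough of the melonic symmetry to make the uniqueness argument nontrivial.
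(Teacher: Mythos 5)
Your proposal is correct and follows essentially the same route as the paper: reduce to the classification of closed \nlo\ graphs in Proposition \ref{prop:nlographsClosed} via the \onepi\ property, then classify single-edge cuts according to whether the edge lies interior or exterior to a melonic insertion, matching the indices $(\ell,m,n)$ and checking that rejoining the external legs inverts the cut. The paper's own proof is a terser ``proof by inspection'' that does not dwell on the non-redundancy concern you raise at the end, but the substance of the argument is the same.
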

The subsets $T_{\ell,m,n}$ are defined as follows:
\begin{description}
\item[$\ell$, $m = 0$, $n=0$:]  The graphs in $T_{\ell,0,0}$ are the 2--point graphs obtained by cutting the edges of $\cG_{\twodipole,\ell}$, that is, they are the elementary 2--dipoles illustrated in Figure \ref{fig:el2dipole}.

\item[$\ell$, $m$, $n = 0$:]  The graphs in $T_{\ell,m,0}$ are obtained by replacing an \textit{interior} active edge of an elementary melon with a graph from $T_{\ell,m-1,0}$. Thus, they have the generic form drawn in Figure \ref{fig:nloInsertions}.

 \begin{figure}[htb]
\centering
\includegraphics[scale = 1]{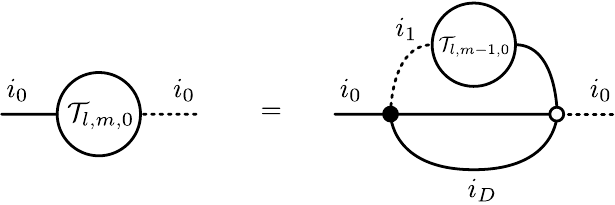}
\caption{\label{fig:nloInsertions} The recursive definition of the elements in $T_{\ell, m,0}$, where $\mathcal{T}_{\ell, m,0} \in T_{\ell, m,0}$ and $\mathcal{T}_{\ell, m-1,0}\in T_{\ell, m-1,0}$.}
\end{figure}

\item[$\ell$, $m$ $n$:] The graphs in $T_{\ell,m,n}$ are obtained from those in $T_{\ell, m,n-1}$ by replacing an active edge with an elementary melon.
\end{description}

Thus, the subset of graphs with $2p$ vertices is:
\begin{equation}
T_{p} = \bigcup_{\substack{1\leq \ell\leq \lfloor p/2\rfloor\\[0.05cm] 0\leq m\leq p - 2\ell}} T_{\ell,m, p -2\ell- m}\;.
\end{equation}

As before, the initial combinatorial factor coming from the Taylor expansion for a graph in $S_{\ell, m,n}$ is  $1/(2\ell +m+n)!$, while the graph is obtained from exactly $(2\ell + m+n)!$ Wick contractions. Thus, the final combinatorial factor is $\sym(\widetilde\cG) = 1$.
This concludes the identification of the graphs contributing to the next-to-leading order.

\noindent\textbf{Remark:}
We stress here the important point that the core graphs $\cG_{\twodipole}$ and thus the whole \nlo\ sector correspond to $D$--dimensional cellular complexes of spherical topology, just like the leading--order graphs.  As a result, the 2--point graphs represent the $D$--dimensional ball.


\section{Critical behavior}
\label{sec:critical}

One now turns to an analysis of $E_{N,g}$, which as one may recall is the free energy at given values of the expansion parameter $N$ and coupling constant $g$.  This may be expanded in both parameters:
\begin{equation}
\label{eq:series}
E_{N,g} = \sum_{\omega}\sum_p E_{\omega,p}\; N^{-\omega} g^p
 \qquad \textrm{where} \qquad E_{\omega, p} = \sum_{\cG\;:\;\substack{\omega(\cG) = \omega \\[0.05cm]  |\cV_\cG| = 2p}} \frac{1}{\textsc{sym}(\cG)}\;.
\end{equation}
The quantity $E_{\omega,p}$ counts the number of closed $(D+1)$--colored graphs with a given degree and a given number of vertices (weighted by the relevant symmetry factors).  In turn, the large--$p$ behaviour of $E_{\omega,p}$ provides the radius of convergence of the series along with the critical exponent:\footnote{For clarity, we have assumed that there are no logarithmic factors contributing to the leading divergence as $g\rightarrow g_{c,\omega}$.  In practice, one should demonstrate this explicitly.}
\begin{equation}
E_{\omega,g} = \sum_p E_{\omega,p}\, g^p \sim A_\omega\; \left(1 - \dfrac{g_{c,\omega}}{g}\right)^{2 - \gamma_\omega} \qquad \textrm{as} \qquad g \rightarrow g_{c,\omega}\;,
\end{equation}
where $A_\omega$ is a constant of proportionality, while $g_{c,\omega}$ and $\gamma_{\omega}$ are the radius of convergence and the susceptibility exponent, respectively. In the coming section, both leading order and next--to--leading order sectors are analysed.  The leading order melonic sector not only provides an invaluable introduction to the techniques used, but also some necessary results. So, it is worth reviewing explicitly, albeit briefly.

As already mentioned, the strategy involves examining the behaviour of the connected 2--point function at the relevant order.  The corresponding behaviour for the free energy can then be found by integrating equation \eqref{eq:2pointSD}.  To succeed, one needs also the 1-particle irreducible (\onepi) 2--point function, given by:
\be
\label{eq:two-point-conserv}
\langle\,\phi^i_{m_i} \, \bar\phi^i_{\bar m_i }\,\rangle_{\onepi} =  {\delta}_{mn}\, \Sigma_{N,g}\; ,
\ee
where ${\Sigma}_{N,g}$ is a constant, depending on $N$ and $g$. There is a convenient identity relating this to the connected 2--point function:
\be
\label{eq:connected-1PI}
{G}_{N,g} = \left(1-{\Sigma}_{N,g}\right)^{-1}\;.
\ee
This leads immediately to the following relations:\footnote{We have used the shorthand:
\begin{equation}
\label{eq:shorthand}
\begin{array}{rcl}
X_{\lo,g} &\equiv& X_{0,g}\\[0.3cm]
X_{\nlo,g} &\equiv& X_{\frac{(D-1)!}{2}(D-2), g}

\end{array}
\end{equation}
where $X$ may be replaced by the suitable observable e.g. $\{E,\,G,\,\Sigma\}$.
}
\be\label{eq:order-by-order}
\ba{rclcl}
{ G}_{\lo,g} &=& (1 - {\Sigma}_{\lo,g})^{-1}\\[0.2cm]

 {G}_{\nlo,g} &=& (1 -  {\Sigma}_{\lo,g})^{-1}\;  {\Sigma}_{\nlo,g}\; (1 -  {\Sigma}_{\lo,g})^{-1}&=&  {G}_{\lo,g}\; {\Sigma}_{\nlo,g}\;  {G}_{\lo,g}

\ea
\ee
 One simply needs to find some more equations to close the system and solve it for the desired function. The equation to be used depend on the detailed combinatorial structure of the graphs at each order.


\subsection{Leading order sector}
\label{ssec:criticallo}

The second equation for the leading order 2--point functions descends directly from the melonic structure of the contributing graphs.  A rooted melonic graph has the generic form given in Figure \ref{fig:melonicgeneric}.

\begin{figure}[htb]
\centering
\includegraphics[scale = 1]{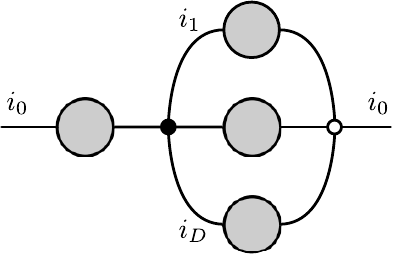}
\caption{\label{fig:melonicgeneric} A generic \lo\ 2--point graph.}
\end{figure}

Thus, any leading order \onepi\ 2--point graph has the form given in  Figure \ref{fig:melonic1PI}, where the shaded circles indicate the insertion of an arbitrary connected rooted melonic graph.

\begin{figure}[htb]
\centering
\includegraphics[scale = 1]{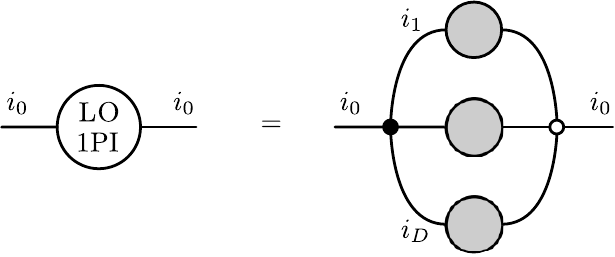}
\caption{\label{fig:melonic1PI} The generic graph contributing to the \lo\ \onepi\ 2--point function. }
\end{figure}

\noindent The diagram illustrates the following mathematical relation:
\be
 {\Sigma}_{\lo,g} = g\,  {G}_{\lo,g}^{D}\;.
\ee
Using \eqref{eq:order-by-order}, this leads to a closed equation for ${G}_{\lo,g}$:
\be\label{eq:closed-melonic}
{G}_{\lo,g} = 1 + g\, {G}_{\lo,g}^{D+1}\;,
\ee
which in turn can be solved via a series expansion for the coefficient:
\be
{G}_{\lo,p}  = C_{p}^{(D+1)} = \frac{1}{(D+1)p + 1} {(D+1)p + 1 \choose p}\;,
\ee
where $C_{p}^{(D+1)}$ are the $(D+1)$--Catalan numbers.  Applying Stirling's formula\footnote{Stirling's formula states that:
\be
\lim_{n\rightarrow \infty} \frac{n!}{\sqrt{2\pi n}\left(\frac{n}{e}\right)^n} = 1
\ee} to the series coefficients, one can examine their large order behavior:
\be\label{eq:melonic-coef-large}
{G}_{\lo,p} \sim  \beta_{\lo}\; (g_{c,\lo})^{-p}\; p^{-\frac{3}{2}}\quad\quad \textrm{where} \quad\quad \left\{
{\renewcommand{\arraystretch}{2}\ba{rcl}
\beta_{\lo} &=& \dfrac{e}{\sqrt{2\pi}} \sqrt{\dfrac{D+1}{D^3}}\\
g_{c,\lo} &=& \dfrac{D^D}{(D+1)^{D+1}}
\ea}\right.
\ee
Such a series has a radius of convergence $g_{c,\lo}$ and the behaviour of the series in the vicinity of $g_{c,\lo}$ is given by:
\be\label{eq:melonic-2p-critical}
{G}_{\lo,g} \sim  \left(1 - \frac{g}{g_{c,\lo}}\right)^{\frac{1}{2}}\;.
\ee
 This implies the following critical behaviour for the free energy:
\be\label{eq:melonic-free-critical}
E_{\lo,g} \sim A_{\lo}  \left(1-\frac{g}{g_{c,\lo}}\right)^{2-\gamma_{\lo}}\quad\quad \textrm{where} \quad\quad \gamma_{\lo} = \frac{1}{2}\;.
\ee
In this context, $\gamma_{\lo}$ is known as the entropy exponent or susceptibility.


\subsection{Next--to--leading order sector}
\label{ssec:criticalnlo}

  The analysis at \nlo\ proceeds similarly. In effect, it requires one to obtain the cardinality of the set $T$ defined in Section \ref{ssec:nlo}.  Consider a \onepi\ 2--point graph occurring at \nlo. It receives contributions from 2--point graphs of the form drawn in Figure \ref{fig:nloequation}. The first corresponds to the insertion an elementary 2--dipole into an elementary melon followed by melonic insertions thereafter (the graphs in $T_{\ell, m,n}$ with $m \geq 1$).  The others correspond to melonic insertions into the elementary 2--dipole graphs (the graphs in $T_{\ell,0,n}$).

\begin{figure}[htb]
\centering
\includegraphics[scale=0.9]{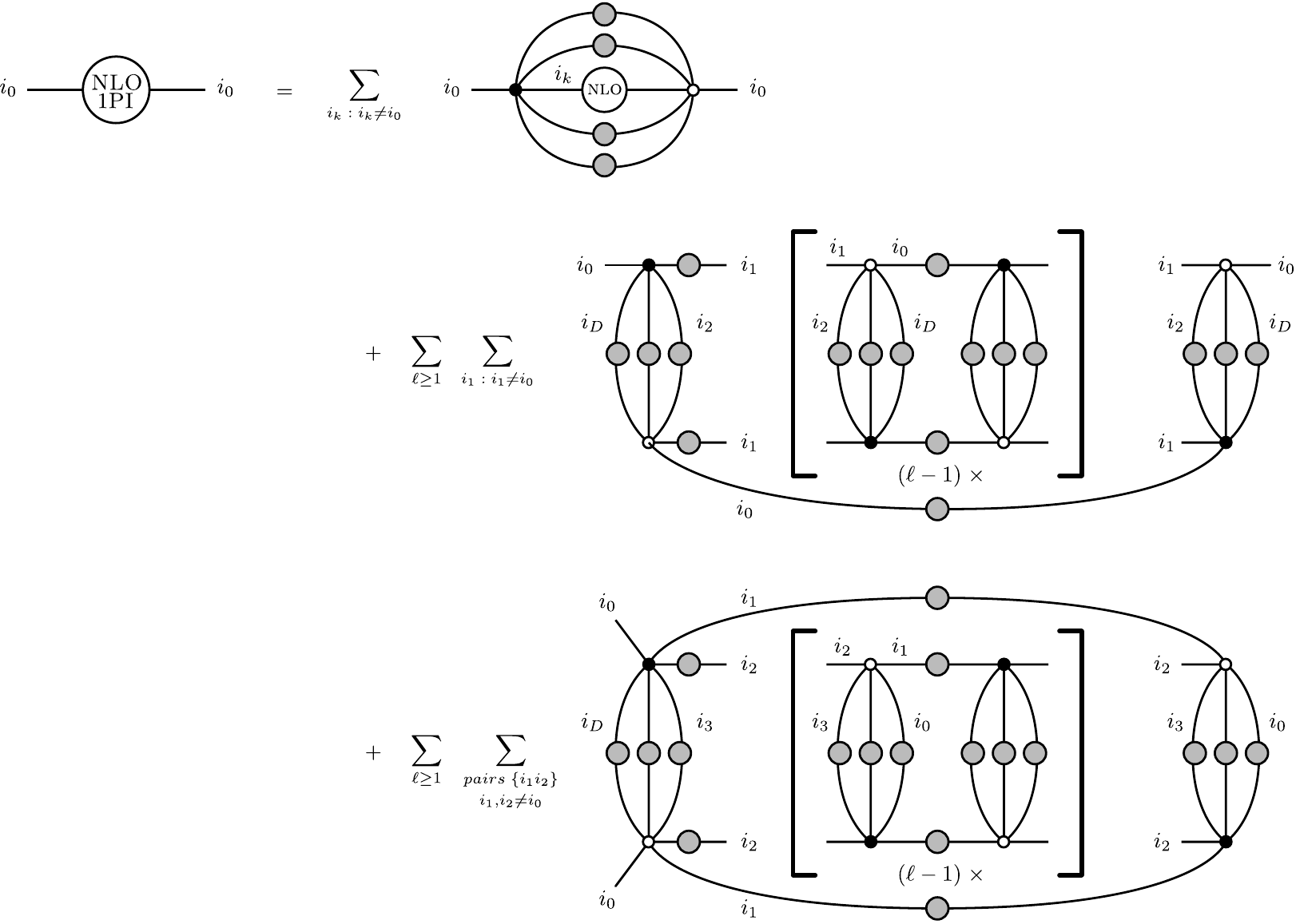}
\caption{\label{fig:nloequation}The \nlo\ consistency equation.}
\end{figure}

In terms of generating functions, this relation translates into the following equation:
\begin{equation}
{\Sigma}_{\nlo,g} = g\, D\, \left[{G}_{\lo,g}\right]^{D-1}\,  {G}_{\nlo,g} + \frac{D(D+1)}{2}\sum_{\ell\geq 1} g^{2\ell}\left[{G}_{\lo,g}\right]^{2\ell(D+1) - 1}  \;,
\end{equation}
where the $D$--dependent factors arise from the choice of species. After rearrangement, one finds:
\begin{equation}
\label{eq:nloSD}
{G}_{\nlo,g} =  \dfrac{\displaystyle \tfrac{D(D+1)}{2} \sum_{\ell\geq 1} g^{2\ell}\,\left[G_{\lo,g}\right]^{2\ell(D+1) +1}  }{1- g\,D\,\left[G_{\lo,g}\right]^{D+1}} = \dfrac{\displaystyle \tfrac{D(D+1)}{2} g^{2}\,\left[G_{\lo,g}\right]^{2D+3}\sum_{\ell\geq 0} \left(G_{\lo,g} - 1\right)^{2\ell}  }{1- g\,D\,\left[G_{\lo,g}\right]^{D+1}}\;,
\end{equation}
where the second equality uses equation \eqref{eq:closed-melonic}. The summation over $\ell$ can be performed explicitly,\footnote{Inverting equation \eqref{eq:closed-melonic} for $g$ as a function of ${G}_{\lo}$, one finds:
\begin{equation}
g\left({G}_{\lo}\right) = (G_{\lo} - 1)/\left[{G}_{\lo}\right]^{D+1}\;.
\end{equation}
One can see that  $g = 0$ corresponds to ${G}_{\lo} = 1$. Moreover,  one knows that the right hand side is a monotonically increasing function of $G_{\lo}$ up to some critical value $G_{\lo,\textrm{critical}}$, where one encounters a stationary point.  By examining its derivatives, one finds that $g\left(G_{\lo}\right)$ has a maximum at ${G}_{\lo,\textrm{critical}} = (D+1)/D$, leading to  $g\left({G}_{\lo,\textrm{critical}}\right) = D^D/(D+1)^{D+1} = g_{c,\lo}$, as expected. Since $|{G}_{\lo,\textrm{critical}} - 1| = 1/D < 1$, we have what we need to resum the series in \eqref{eq:nloSD}.} since $|G_{\lo,g} - 1|< 1$ in the range $0\leq g \leq g_{c,\lo}$ and one gets:
\begin{equation}
{G}_{\nlo,g} = \dfrac{\displaystyle \tfrac{D(D+1)}{2} g^{2}\,\left[G_{\lo,g}\right]^{2D+2} \left({G}_{\lo,g} -2\right)^{-1}}{\left(1- g\,D\,\left[G_{\lo,g}\right]^{D+1}\right)}\;.
\end{equation}
Moveover, upon differentiating \eqref{eq:closed-melonic}, one finds:
\begin{equation}
\label{eq:derivative}
\frac{\partial}{\partial g} G_{\lo,g} = \frac{\left[G_{\lo,g}\right]^{D+2}}{1 - g\,D\,\left[G_{\lo,g}\right]^{D+1}}\;,
\end{equation}
so that:
\begin{equation}
\label{eq:final-equation}
G_{\nlo,g} = g^2\, \frac{D}{2} \left(\frac{1}{G_{\lo,g} - 2}\right)\frac{\partial}{\partial g}\left(\left[G_{\lo,g}\right]^{D+2}\right)\;.
\end{equation}
Knowing the critical behaviour of $G_{\lo,g}$ allows one to determine the behaviour of the \nlo\ series from \eqref{eq:final-equation}:
\begin{equation}
\label{eq:behaveresult}
G_{\nlo,g} \sim  \left(1-\frac{g}{g_{c,\lo}}\right)^{-\frac{1}{2}} \quad\quad \textrm{since} \quad\quad G_{\lo,g} \sim \textrm{constant} + \left(1-\frac{g}{g_{c,\lo}}\right)^{\frac{1}{2}}
\end{equation}
%
This implies the following critical behaviour for the free energy:
\be\label{eq:nlo-critical}
E_{\nlo,g} \sim A_{\nlo}  \left(1-\frac{g}{g_{c,\lo}}\right)^{2-\gamma_{\nlo}}\quad\quad \textrm{where} \quad\quad \gamma_{\nlo} = \frac{3}{2}\;.
\ee

We find then that the critical point of the $\nlo$ is the same as that of the leading order, while the critical exponent differs. This is exactly the property indicating the potential for a double scaling limit.

\section{Discussion: a double scaling limit?}
\label{sec:discussion}

Before concluding, we would like to discuss briefly the implications of our results for the existence and nature of the double scaling limit.
With the above analysis, we have seen that:
\begin{equation}
\label{eq:finalseries}
E_{N,g} \sim A_{\lo}\; N^0 \left(1 - \frac{g}{g_c}\right)^{\frac{3}{2}} + A_{\nlo}\; N^{2 - D} \left(1 - \frac{g}{g_c}\right)^{\frac{1}{2}} + \dots
\end{equation}
Obviously, these two pieces of information already allow one to conjecture a possible form of double scaling limit, whose actual realization relies on the following two properties:
\begin{description}
\item[A1:] There is subset (perhaps with infinity cardinality) of orders, whose elements are labelled by $m\in \mathbb{N}_0$, behaving as:
\begin{equation}
\label{eq:conjecture}
E_{m,g} \sim  A_m\; N^{m(2 - D)} \left(1 - \frac{g}{g_c}\right)^{\frac{3}{2} - m}
\end{equation}
\item[A2:] All other orders are still washed away in the double scaling limit.
\end{description}

If one assumes these two properties, then the double scaling limit of $(N^{3(D-2)/2}E_{N,g})$ ensues from:
\begin{equation}
\label{eq:dsl}
N\rightarrow \infty,\qquad g\rightarrow g_c,\qquad \textrm{such that} \qquad N\left(1 - \frac{g}{g_c}\right)^{1/(D-2)}= \kappa\;.
\end{equation}

We do not have a proof for the above assumptions. The proposed form of double scaling remains, therefore, a conjecture. We can give, however, some supporting arguments for it.

As regards (\textbf{A1}),  there are certainly core graphs weighted by $N^{m(2-D)}$, namely, those that reduce to the supermelon graph through the annihilation of $m$ successive 2--dipoles.\footnote{A subtlety of 2--dipole annihilation is that one may need to insert 1--dipoles briefly in order to effect the annihilation.}

Having said that, one would need to show that the sectors, which we call \textbf{2--dipole sectors}, generated from these core graphs each give rise to a resummable series with the radius of convergence and critical exponent conjectured in \eqref{eq:conjecture}.

Moreover, for a given $m$, there may be other sectors of graphs weighted by $N^{m(2-D)}$.  If such sectors exist, one would need to show the realization of one of the following two scenarios, either \textit{i}) they have the same behaviour as the 2--dipole sectors or \textit{ii}) they do not have the same behaviour, but the radius of convergence is larger then $g_c$, so that they are washed away in the limit.

To prove (\textbf{A2}), one would need to show that the 2--dipole sectors dominate over all other sectors.  This appears a rather daunting task. One might attempt to tackle it by showing that for sectors generated from core graphs with, say, $2p$ vertices, then the 2--dipole sector generated from core graphs with $2p$ vertices dominate.

\begin{figure}[htb]
\centering
\includegraphics[scale = 0.9]{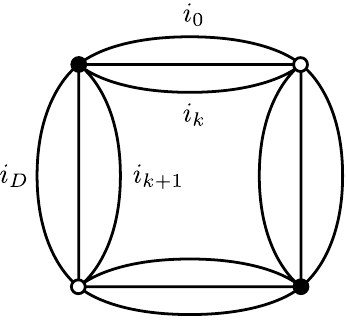}
\caption{\label{fig:k_dipole} A subdominant core graph with four vertices.}
\end{figure}

For example, we can show that this is already for those sectors generated from core graphs with four vertices. As illustrated in Figure \ref{fig:k_dipole}, all such core graphs are obtained from the supermelon core graph after the creation of a $k$--dipole of some species with $k \geq 3$ and denoted here by $\cG_{k\textrm{--dipole}}$.  The techniques that we developed for the \nlo\ sector work on these sectors perfectly.\footnote{In fact, for the sectors generated from $\cG_{k\textrm{--dipole}}$ with $k\geq 3$, the equivalent of the closed graphs $\cG_{\twodipole,\ell}$ with $\ell>1$ do not exist. Thus, the set of closed graphs for these sectors is simply $S = \cup_{n\geq 0} S_{1,n}$ where $S_{1,n} = \{G_{k\textrm{--dipole}}\; \textit{with arbitrary melonic insertions}\}$.  This, however, is already enough to determine the same critical behaviour.}  One finds that while their weight in the $1/N$--expansion is $N^{D + (k - 1)(k - D)}$, they generate a resummable series with behaviour $(1 - g/g_c)^{1/2}$. Thus, their contribution to the series $(N^{3(D-2)/2}E_{N,g})$ is:
\begin{equation}
N^{ 3(D-2)/2 + (k - 1)(k - D)}\left(1 - \frac{g}{g_c}\right)^{\frac{1}{2}}\longrightarrow 0
\end{equation}
in the double scaling limit defined by \eqref{eq:dsl}. The aim is to extend such arguments to higher orders in the expansion.

There are two points of further interest pertaining to the double scaling limit conjectured above.

The first is a comparison to the double scaling limit of the single matrix model.  In that context, the double scaling limit captures contributions from all topologies. For $D = 3$, rather than contain contributions from all topologies, the limit above captures the complete spherical sector of the \iid\ tensor model, since graphs representing the 3--sphere may be obtained from $\mathcal{G}_{\supermelon}$ by some sequence of  1--dipole and 2--dipole moves .  For $D>3$,  the limit captures less and less of the spherical sector, since performing $k$--dipole moves (with $k> 2$) on $\mathcal{G}_{\supermelon}$ may also result in a graph encoding the $D$--sphere. This means that, the higher the dimension, the harder it is for the simplest tensor models to capture  the topology of the continuum spacetime they aim to describe, not to mention its geometry. Richer models involving more coupling constants, thus multiple-scaling limits, or more structured amplitudes, depending on a richer set of data, are then called for.

The second utilizes the gravitational interpretation provided by Euclidean Dynamical Triangulations.  As before, it allows one to enter a regime of finite Newton's constant, by defining a renormalized constant:
\begin{equation}
\frac{1}{G_R} = \frac{8\log \kappa}{\cV_{D-2}} = \frac{1}{G} + \frac{8}{\cV_{D-2}} \log\left(1 - \frac{g}{g_c}\right)^{1/(D-2)}\;.
\end{equation}
However, for $D>2$, Newton's constant is dimensionful. As a result, this $G_R$ is finite in the large--volume limit rather than the continuum limit, where one has also $a\rightarrow 0$ and which will require a more subtle analysis.

\section{Conclusions}
\label{sec:conclusions}

We have studied the next-to-leading order in the large-N expansion of \iid\  tensor models, in any
dimension. We have identified the graphs contributing to it, corresponding 
 to families $S_{\ell,n}$ of graphs.  We then studied the critical behaviour of the free
energy for such graphs. The result is that the critical value of the coupling constant is the same
as at leading order, while the critical (susceptibility) exponent is different. These results
support the possibility of a double scaling limit capturing more properties of the sum over
triangulations for spherical topologies, and suggest the form that this double scaling may take, as
we have discussed in some detail. The analysis we performed for such simple tensor models can also
form the basis of a similar analysis for more involved tensor models as well as for proper TGFTs.

\appendix

\section{Next--to--leading order: technical details}
\label{app:nloTech}

In this appendix, we collect the needed technical definitions, report the details of our results and present the corresponding proofs.

\subsection{Some features of (rooted) melonic graphs}
\label{ssec:melonfeatures}

\subsubsection{Irreducibility}

To begin, we generalize the idea of $n$--particle reducibility to include some color information.
\begin{definition}
\label{def:vecPI}
A graph is said to be $(a_0,\dots,a_D)$--particle irreducible, if it remains connected, having cut $a_j$ edges of color $j$, for \textit{all} $j$.
\end{definition}
We shall denote by $\vec e_{j_1\dots j_k}$ the vector with components $j_1,\dots,j_k$ equalling 0 and with the rest equalling 1.  Here are some easily verified facts about $(D+1)$--colored graphs:
\begin{description}
\item[Q1:] All connected closed $(D+1)$--colored graphs are $\vec e_j$--particle irreducible, for all $j\in\{0,\dots, D\}$.  In other words, one may cut along any $D$ distinctly colored edges and the graph remains connected.

\item[Q2:] All \onepi\ 2--point $(D+1)$--colored graphs with external edges of color $\{i\}$, other than the elementary melon, are $\vec e_j$--particle irreducible, for all $j\in\{\widehat i\}$. The elementary melon is the exception, since it does not have an internal edge of color $\{i\}$. Rather the elementary melon is of species $\{i\}$ is $\vec e_{ij}$--particle irreducible, for all $j\in\{\widehat i\}$.

\end{description}
Now we specialize to melonic graphs.

\subsubsection{Melonic vertex pairs}  Any closed $(D+1)$--colored graph has equal numbers of black and white vertices.  Thus, in principle, there are many possible ways to partition these vertices into black--white pairs.  However, certain graph properties serve to distinguish particular pairings.
\begin{definition}
Rooted melonic graphs have a \textbf{melonic vertex pairing} defined at:
\begin{description}
\label{def:melonicvp}
\item[$p=1$:] An elementary melon has just two vertices and these form a melonic vertex pair.
\item[$p=k$:] Such a graph is constructed via the iterative insertion of elementary melons.  Each elementary melon has a pair of vertices. Thus, the vertices of the graph are paired according to the elementary melon, within which they were inserted.
\end{description}
\end{definition}
As a result, for a given vertex in a rooted melonic graph, one may identify its paired vertex by deleting all sub-nested melons. Moreover, for a closed melonic graph $\cG$, one may construct a melonic vertex pairing by inserting a fictitious cut along one edge and utilizing the resultant rooted graph. From this, one can derive a number of properties:
\begin{description}
\item[P1:\label{itm:unique}] A closed melonic graph $\cG$ has a unique melonic vertex pairing.
\item[P2:] Consider cutting the two edges of some color $\{i\}$ that are incident to the vertices of a melonic vertex pair. This splits the graph $\cG$ into two disjoint connected components (unless the edge of color $\{i\}$ joined the vertices of the vertex pair directly).

\item[P3:] Consider the k--bubbles of $\cG$. These are also melonic and preserve the melonic vertex pairing defined by $\cG$.

\item[P4:] Consider a melonic vertex pair with vertices $v$ and $\bar v$, then $v$ lies in a face of $\cG$ if and only if $\bar v$ lies in that face.  In other words, they lie in the same $D(D+1)/2$ faces.

\end{description}

\subsubsection{1--dipole creation}
\label{sssec:dipolecreate}

Let us remind ourselves that:
\begin{definition}
\label{def:kdipole}
A $k$--dipole $d_k$ is a subset of $\cG$ comprising of two vertices $v$, $\bar v$ such that:
\begin{description}
\item[--] $v$ and $\bar v$ share $k$ edges of colors $i_0,\dots, i_{k-1}$;
\item[--] $v$ and $\bar v$ lie in distinct $(D+1 - k)$--bubbles. In the arguments below, we say that the $k$--dipole \lq\lq separates\rq\rq\ the vertices in the two bubbles.
\end{description}
\end{definition}

\begin{lemma}
\label{lem:separate}
Consider a melonic graph. Then, 1--dipole creation separates some melonic vertex pair.
\end{lemma}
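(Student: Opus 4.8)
The plan is to translate the statement into a purely combinatorial fact about separating edge cuts of melonic graphs, and then prove that fact by induction on the number of vertices.

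First I would unpack the move. A $1$--dipole creation of species $\{i_0\}$ takes place entirely inside a single $D$--bubble $\cB$ of species $\{\widehat i_0\}$: by Definition \ref{def:kdipole} and the dipole picture, it amounts to choosing a set $E$ of $D$ edges of $\cB$, exactly one of each colour in $\{\widehat i_0\}$, whose removal disconnects $\cB$ into two pieces $\cB_1$ and $\cB_2$; the two new vertices $v,\bar v$ are then inserted so that $v$ lies in the bubble grown from $\cB_1$ and $\bar v$ in the one grown from $\cB_2$. Saying that the created dipole \lq\lq separates\rq\rq\ a melonic vertex pair is therefore equivalent to saying that $E$ separates the two vertices of some melonic vertex pair of $\cB$. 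By Property P3 the bubble $\cB$ is itself melonic and carries the melonic vertex pairing induced from $\cG$, so it suffices to prove the following: in a closed melonic graph on $D$ colours, any set $E$ consisting of one edge of each colour whose removal disconnects the graph must separate some melonic vertex pair.

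I would prove this by induction on the number of vertices. The base case is the two--vertex graph with $D$ parallel edges: its only melonic pair is the pair of vertices, and the only one--edge--per--colour cut severs all edges and hence separates that pair. For the inductive step, pick an innermost melon $(m,\bar m)$, a leaf of the melon--insertion tree, i.e. an elementary melon that can be annihilated while the graph stays melonic. If $E$ already separates $m$ from $\bar m$ we are done. Otherwise $m$ and $\bar m$ lie on the same side, say in $\cB_1$; then none of the $D-1$ edges internal to the melon can lie in the crossing set $E$, and since $E$ contains only one edge of the melon's external colour, at most one of the two external edges of the melon crosses the cut. Annihilating the melon (joining its two external edges into one) then yields a smaller closed melonic graph $\cB'$ on which the images of the edges of $E$ still form a one--edge--per--colour cut disconnecting $\cB'$: either both external edges stayed on one side, in which case $E$ is unchanged, or one of them crossed, in which case the merged edge inherits the crossing. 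One checks that the side containing the melon cannot consist of $m,\bar m$ alone (that would force two edges of the external colour into $E$), so both parts of $\cB'$ remain non--empty. By the induction hypothesis $E$ separates a melonic vertex pair of $\cB'$, and since annihilating an innermost melon does not alter the pairing of the surviving vertices (Definition \ref{def:melonicvp} together with the uniqueness in P1), that same pair is a melonic pair of $\cB$ separated by $E$, completing the induction.

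The main obstacle I anticipate is the bookkeeping in the inductive step rather than any conceptual difficulty: one must check carefully that after annihilating the innermost melon the cut $E$ still meets each colour exactly once and still disconnects the reduced graph, and that the pairing genuinely descends (the degenerate configurations, where the melon's external edges close up on few vertices, all funnel into the two--vertex base case). Establishing cleanly that a $1$--dipole creation is exactly a one--edge--per--colour separating cut of a single $\{\widehat i_0\}$--bubble is the other point deserving care, though it follows directly from the definition of the dipole move.
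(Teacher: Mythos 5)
Your proof is essentially correct, but it takes a genuinely different route from the paper's. The paper argues by contradiction in a few lines: if the created $1$--dipole of species $\{j\}$ separated no melonic vertex pair, then (using the closure of melonic graphs under $1$--dipole creation together with the uniqueness of the pairing, \textbf{P1}) the two new vertices $v,\bar v$ would themselves form a melonic pair; property \textbf{P2} then says that cutting the two edges of a colour $i\in\{\widehat j\}$ incident to $v$ and $\bar v$ disconnects the graph, and since exactly one of these two edges lies in each of the two $D$--bubbles of species $\{\widehat j\}$ produced by the dipole, one of those bubbles would be disconnected by a single edge cut, contradicting its one--particle irreducibility. You instead reduce the lemma to a self--contained combinatorial claim about one--edge--per--colour separating cuts of a closed melonic $D$--coloured graph and prove that by induction, peeling off an innermost elementary melon. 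Your route is longer but more elementary: it does not lean on \textbf{P2}, on irreducibility, or on the melonic closure proposition, and it delivers the slightly stronger statement that \emph{every} such cut separates a pair. The two points you flag yourself are indeed the ones to nail down: (i) what a valid $1$--dipole creation actually gives you is a set $E$ of $D$ edges, one per colour, \emph{all} of which cross between the two resulting bubbles (each has its black endpoint on the $v$--side and its white endpoint on the $\bar v$--side); you should build this ``all edges of $E$ cross'' invariant into the inductive hypothesis, since your step tacitly uses it both when you exclude internal melon edges from $E$ and when you argue that the side containing the melon cannot be $\{m,\bar m\}$ alone; (ii) the check that the merged external edge is again a crossing edge of the reduced cut, which you carry out correctly. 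With those made explicit the induction closes, and the separated pair of $\cB'$ lifts to a melonic pair of $\cG$ via \textbf{P1} and \textbf{P3} exactly as you say.
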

\begin{proof}
At the outset, let us note that closed $(D+1)$--colored graphs are 1--particle irreducible for $D\geq 1$. Then, insert a 1--dipole of species $\{j\}$ and suppose that it does not separate any melonic vertex pair.  Thus, the two new vertices, $v$ and $\bar v$, form a melonic vertex pair themselves. Therefore, by (\textbf{P2}), cutting the edges of color $\{i\}$, with $i\in\{\widehat j\}$, emanating from $v$ and $\bar v$ disconnects the graph into two connected components.  But one of these cuts lies in each of the $D$--bubbles of species $\{\widehat j\}$ (separated by the 1--dipole).  Thus, the  $D$--bubble is disconnected by a single cut, which contradicts the fact that it is \onepi.
\end{proof}

Now, consider a melonic vertex pair $v\bar v$. By property (\textbf{P4}), they both lie in the same $D(D+1)/2$ faces, one of each species $\{j,k\}$.  Now, let us examine these faces more closely, and in particular, their bounding edges:
\begin{description}
\item[--] $\cE^{v\bar v}_{j,k}$ is the set of edges of color $\{j\}$ that lie in the boundary of the face of species $\{j,k\}$.
\item[--] $\cE^{v\bar v}_{j} = \bigcap_{k\in \{\widehat j\}} \cE^{v\bar v}_{j,k}$ is the set of edges of color $\{j\}$ that lie in the boundary of \textit{all} such faces.
\end{description}

\begin{definition}
\label{def:cuts}
Consider a generic (not necessarily melonic) graph that possesses a melonic vertex pairing. $\cE^{v\bar v}_{0\textrm{--dipole}}$ denotes all the edges that may be cut to create a 0--dipole, which separates the melonic vertex pair $v\bar v$. $\cE^{v\bar v}_{1\textrm{--dipole},i}$ denotes all the edges that may be cut to create a 1--dipole of species $\{i\}$, which separates the melonic vertex pair $v\bar v$.
\end{definition}

\begin{lemma}
\label{lem:cutedges}
For a melonic graph:
\begin{equation}
\label{eq:onedipedges}
\cE^{v\bar v}_{0\textrm{--dipole}} = \bigcup_{j\in\{0,\dots, D\}} \cE^{v\bar v}_{j}\quad\quad,\quad\quad
\cE^{v\bar v}_{1\textrm{--dipole}, i} = \bigcup_{j\in\{\widehat i\}} \cE^{v\bar v}_{j}\;.
\end{equation}
\end{lemma}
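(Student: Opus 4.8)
The plan is to reduce both identities to a single per--edge criterion and then verify the two inclusions of the resulting set equality. Concretely, I would establish the master claim: for a colour-$j$ edge $e$, the edge $e$ can occur among the edges severed in a dipole--creation move that separates the melonic vertex pair $v\bar v$ and whose species does not contain the colour $j$ if and only if $e\in\cE^{v\bar v}_{j}$. Granting this, the two cases of the lemma differ only in which colours are admissible: a 0--dipole has empty species and so reserves no colour, whence every $j$ is allowed and $\cE^{v\bar v}_{0\textrm{--dipole}}=\bigcup_{j}\cE^{v\bar v}_{j}$; a 1--dipole of species $\{i\}$ keeps its single internal edge of colour $i$ intact, so the cut colour must satisfy $j\neq i$ and $\cE^{v\bar v}_{1\textrm{--dipole},i}=\bigcup_{j\neq i}\cE^{v\bar v}_{j}$. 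Throughout I would use property (\textbf{P4}), by which $v$ and $\bar v$ lie in exactly the same $\binom{D+1}{2}$ faces, one of each species $\{j,k\}$, and picture each such face as a bichromatic cycle on which the paired vertices cut out two complementary arcs.

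For the inclusion $\cE^{v\bar v}_{j}\subseteq(\textrm{admissible cuts})$, I take $e\in\cE^{v\bar v}_{j}=\bigcap_{k\in\{\widehat j\}}\cE^{v\bar v}_{j,k}$, so that $e$ is a colour-$j$ edge sitting on the boundary of \emph{every} face $\{j,k\}$ through the pair. I would argue that the colour--$j$ edges of this type, one for each admissible colour, assemble into the complementary edge set of a dipole, and that cutting them severs the colour-$j$ strand in all of these faces simultaneously; after the insertion the two paired vertices share no surviving face of the relevant species and are thereby forced into distinct $(D+1-k)$--bubbles, which is exactly the separation condition of Definition \ref{def:kdipole}. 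Properties (\textbf{P2}) and (\textbf{P3}) would supply the two resulting bubbles explicitly and guarantee that the melonic pairing descends to them consistently, so the created object is a genuine dipole rather than an accidental one.

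For the reverse inclusion I suppose that cutting a colour-$j$ edge $e$ is part of a move producing a dipole that separates $v\bar v$, and I must deduce $e\in\cE^{v\bar v}_{j}$, i.e. that $e$ meets every face $\{j,k\}$ through the pair. I would run the contrapositive: if $e$ misses some face $\{j,k_0\}$ through $v\bar v$, then that face persists after the cut as a bichromatic cycle still threading both $v$ and $\bar v$ inside the bubble that was supposed to be split, so the pair cannot have been separated. The irreducibility facts (\textbf{Q1}) and (\textbf{Q2}) are the precise tools that make this rigorous: they forbid such a surviving strand from being a removable bridge, so a surviving face genuinely obstructs the separation. The colour restriction then drops out, since a 1--dipole of species $\{i\}$ preserves its colour-$i$ edge and hence cannot have a colour-$i$ edge among the cuts, forcing $j\neq i$.

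The step I expect to be the main obstacle is the hard direction of the master claim, namely the exact equivalence between the topological statement \lq\lq $v$ and $\bar v$ land in distinct $(D+1-k)$--bubbles\rq\rq\ and the purely combinatorial statement \lq\lq $e$ lies on all $D$ faces $\{j,k\}$ through the pair.\rq\rq\ The delicate point is excluding the intermediate regime in which $e$ meets some but not all of these faces while the companion cut edges of the other colours appear to compensate; ruling this out requires the full melonic structure --- the uniqueness of the melonic pairing (\textbf{P1}), its inheritance by bubbles (\textbf{P3}), and the coincidence--of--faces property (\textbf{P4}) --- together with the hypothesis $D\geq 3$, which ensures that for each colour $j$ there are enough partner colours $k$ for the intersection defining $\cE^{v\bar v}_{j}$ to behave rigidly. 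Packaging these ingredients into a uniform argument, rather than a case analysis over the $\binom{D+1}{2}$ face species, is where the genuine work lies.
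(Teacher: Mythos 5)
Your reformulation of the lemma as a per--edge criterion is reasonable, but both implications of your ``master claim'' rest on steps that fail as stated, and the one ingredient that actually makes the lemma work is missing. The paper's proof does not argue face by face at all: it draws the closed melonic graph as $v$ and $\bar v$ joined by $D+1$ colored strands, each strand being a chain of \onepi\ melonic $2$--point insertions separated by ``exterior'' edges (Figure \ref{fig:melons}); it observes that $\cE^{v\bar v}_{j}$ is precisely the set of exterior edges of the color--$j$ strand, and then uses the irreducibility property (\textbf{Q2}) to show that no cut set containing an edge interior to a \onepi\ decoration can disconnect $v$ from $\bar v$, while choosing one exterior edge per admissible color manifestly does. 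This global decomposition is what converts the separation condition of Definition \ref{def:kdipole} --- a connectivity statement about bubbles --- into a checkable statement about which strands are severed; without it, neither of your inclusions goes through.

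Concretely: in your forward direction you infer that $v$ and $\bar v$ ``share no surviving face of the relevant species and are thereby forced into distinct $(D+1-k)$--bubbles''. Not sharing a $2$--bubble does not imply lying in distinct $D$--bubbles (or distinct components): two vertices can be joined by a path using three or more colors without lying on a common face, so the separation condition is strictly stronger than the face condition you verify. In your reverse direction you argue that if $e$ misses the face of species $\{j,k_0\}$ through the pair, ``that face persists after the cut'' --- but the cut set also contains one edge of color $k_0$ (and of every other admissible color), any of which may lie on that very face and sever it, so persistence is not established and the contrapositive does not close. You correctly flag the equivalence between the bubble statement and the face statement as ``where the genuine work lies'', but you do not supply an argument for it; that equivalence is exactly what the strand/\onepi\ decomposition together with (\textbf{Q2}) provides, and it is the content of the proof rather than a packaging issue.
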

\begin{proof}

\begin{figure}[htb]
\centering
\includegraphics[scale = 1.3]{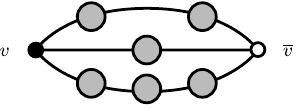}
\caption{\label{fig:melons} A closed melonic graph, with melonic vertex pair $v\bar v$ and \onepi\ melonic decorations.}
\end{figure}

Consider a closed melonic graph in Figure \ref{fig:melons}, with melonic vertex pair $v\bar v$.  We have decomposed all melonic decorations into their \onepi\ components.
$\cE^{v\bar v}_{0\textrm{--dipole}}$ contains all, and only, those edges exterior to all \onepi\ decorations.   That  $\cE^{v\bar v}_{0\textrm{--dipole}}$ cannot contain edges interior to some \onepi decoration follows very simply from the irreducibility property (\textbf{Q2}).

Similarly, $\cE^{v\bar v}_{1\textrm{--dipole}, i}$ contains those edges of color $\{j\}$ (with $j\in\{\widehat i\}$) exterior to all \onepi\ decorations.
\end{proof}

Finally, a result from \cite{critical}:
\begin{proposition}
\label{prop:melonclosure}
The set of rooted melonic graphs $M$ is closed under 1--dipole creation and annihilation.
\end{proposition}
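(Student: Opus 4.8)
The plan is to treat creation and annihilation as mutual inverses and to reduce the whole statement to a single local analysis, carried out with the two preceding lemmas, of how a $1$--dipole sits inside the melonic vertex pairing. I would argue by induction on the number of vertices $2p$, with the elementary melons of $M_1$ as the base case.

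For $1$--dipole creation, I would let $\cG\in M$ and create a $1$--dipole of species $\{i_0\}$ to obtain $\cG'$. By Lemma~\ref{lem:separate} the new dipole separates some melonic vertex pair $v\bar v$ of $\cG$, and by Lemma~\ref{lem:cutedges} the edges one cuts to perform the move are drawn from $\cE^{v\bar v}_{1\textrm{--dipole},i_0}=\bigcup_{j\in\{\widehat{i_0}\}}\cE^{v\bar v}_j$; since a $1$--dipole creation severs exactly one edge of each color $j\neq i_0$, it selects precisely one outermost active edge of each color $j$ from the \onepi\ melonic decorations attached to $v\bar v$. The key structural claim I would then verify is that re--routing these $D$ edges through the two new vertices does not pair the new vertices with each other but re--pairs them with $v$ and $\bar v$: the new black (resp.\ white) vertex comes to share $D$ edges with $\bar v$ (resp.\ $v$), so that $\cG'$ acquires exactly one additional elementary melon of color $i_0$ relative to $\cG$, inserted along an active edge. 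I already checked that this is what happens on the $p=1$ melon, where the move produces the $p=2$ graph with $v$ re--paired to the new white vertex and $\bar v$ to the new black one. Granting the claim, the melonic vertex pairing of $\cG$ (unique by (\textbf{P1})) extends to one of $\cG'$, exhibiting $\cG'$ as a melon insertion and hence placing it in $M$.

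For $1$--dipole annihilation I would run the same analysis in reverse. Given $\cG\in M$ carrying a $1$--dipole $u\bar u$ of species $\{i_0\}$, I would show, again using (\textbf{P2})--(\textbf{P4}) together with Lemma~\ref{lem:cutedges}, that any such dipole in a melonic graph necessarily occurs in the re--paired configuration identified above; its annihilation then deletes a single elementary melon and reverses the insertion, so the result lies in $M$ by the inductive hypothesis. The two directions together give closure.

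The main obstacle I expect is the structural claim at the heart of the creation step: proving, for a \emph{general} melonic decoration around $v\bar v$ and not merely the elementary melon, that cutting one outermost edge per color and splicing in the dipole yields \emph{exactly} an elementary--melon insertion. This needs a careful colored--graph isomorphism argument, a case split according to whether the cut edges are incident to $v$, to $\bar v$, or to intermediate sub--melons, and a check that the new $D$--bubbles of species $\{\widehat{i_0}\}$ produced by the move really are the two distinct bubbles required by the definition of a $1$--dipole. The irreducibility facts (\textbf{Q1})--(\textbf{Q2}) and property (\textbf{P2}) are what I would use to pin down that the two new vertices cannot pair with one another, which is the crux of the re--pairing.
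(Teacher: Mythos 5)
First, a point of reference: the paper does not prove this proposition itself --- it is imported verbatim from \cite{critical} --- so there is no in--paper argument to measure yours against. On its own terms, your plan identifies the right mechanism (a $1$--dipole move on a melonic graph amounts to inserting or deleting an elementary melon, realized by re--pairing a separated melonic vertex pair with the two new vertices) and invokes Lemma \ref{lem:separate} and Lemma \ref{lem:cutedges} in the right places. The problem is that the entire content of the proposition lives in the ``key structural claim'' which you state, check only for $p=1$, and then explicitly defer; as written this is a plan with an acknowledged hole rather than a proof. Moreover, the claim in the literal form you give it --- that the new black (resp.\ white) vertex comes to share $D$ edges with the white (resp.\ black) member of the separated pair --- is false in general: Lemma \ref{lem:cutedges} allows the cut edge of color $j$ to lie in the \emph{middle} of the color--$j$ strand between $v$ and $\bar v$, i.e.\ between two \onepi\ decorations and adjacent to neither vertex, in which case the new vertices attach to decoration vertices rather than to $v$ or $\bar v$. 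The correct statement is that the \onepi\ decorations along each strand are redistributed between the two new pairs $\{v,b\}$ and $\{\bar v,w\}$, each of which is then a \emph{decorated} elementary melon; proving that is precisely the case analysis you postpone. The annihilation direction carries a further unproved ingredient: you must classify \emph{all} $1$--dipoles of a melonic graph and show each is such a ``cross pair'', which does not follow from running the creation analysis backwards.

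There is also a one--line route using only facts the paper already quotes, which you may prefer to the local combinatorics. Equation \eqref{eq:degreedipole} with $k=1$ says that $1$--dipole creation and annihilation leave the degree $\omega$ unchanged, and the leading--order classification recalled in Section \ref{ssec:lo} (again from \cite{critical}) identifies the connected closed graphs with $\omega=0$ exactly with the closed melonic graphs, equivalently the degree--zero $2$--point graphs with the rooted melonic ones. Since a $1$--dipole move preserves connectedness and the external legs, it maps degree--zero graphs to degree--zero graphs, hence $M$ to $M$. If instead you want a proof independent of that classification, you must actually carry out the re--pairing argument, in the corrected form indicated above, for both directions of the move.
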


\subsection{Identifying {\sc nlo} core graphs}
\label{ssec:identifying}

At first glance, it is perhaps unsurprising that the graphs illustrated in Figure \ref{fig:core-graphs-nlo}, being also the next simplest in terms of number of vertices, are the \nlo\ core graphs. However, it emerges that it requires quite some care to prove this definitively.   Closer inspection reveals that the graphs have degree:
\be
\omega(\cG_{\twodipole}) = \frac{(D-1)!}{2} (D-2)\;.
\ee
Since the degree of $\cG_{\twodipole}$ scales factorially with $D$, it is unclear at the outset that no other graph, perhaps less superficially obvious, sneaks in with a smaller degree to take the role  of the \nlo\ core graph.   As it stands, the constraints on core graphs are not strong enough to rule out this possibility; one has only that they must be connected with $p > 1$. The aim of this subsection is to develop constraints that rule out this possibility.

\begin{lemma}
\label{lem:2-melonic}
Consider a $(D+1)$--colored {\bf core} graph $\cG$, at order $p$ and $D\geq 4$, with the following properties:
\begin{itemize}
\item[--] $\cG$ possesses exactly two melonic $D$--bubbles, say of species $\{i_0\}$ and $\{i_1\}$.
\item[--] $\cG$ possesses a planar jacket.
\end{itemize}
Then, $\cG$ is the supermelon graph with $p-1$ 2--dipoles of species $\{i_0,i_1\}$ inserted.
\end{lemma}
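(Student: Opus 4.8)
The plan is to reconstruct $\cG$ entirely from its two melonic $D$--bubbles together with the combinatorics of the colour-$i_0$ and colour-$i_1$ edges, using the core hypothesis and the single planar jacket to eliminate every remaining degree of freedom. First I would record what the melonic $D$--bubbles give us. As melonic $D$--coloured graphs, $\cB_{\hat i_0}$ and $\cB_{\hat i_1}$ each carry a unique melonic vertex pairing (\textbf{P1}) and obey (\textbf{P2})--(\textbf{P4}); in particular every one of their jackets is planar, since melonicity is equivalent to the vanishing of all jacket genera. Because both bubbles contain every \emph{core} colour (all colours other than $i_0$ and $i_1$), property (\textbf{P3}) forces the two pairings to restrict to one and the same melonic pairing on the shared core-colour sub-bubbles. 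I would use this, together with the irreducibility facts (\textbf{Q1})--(\textbf{Q2}), to show that each of the two bubbles in fact spans all $2p$ vertices and that they induce a \emph{single} melonic vertex pairing on $\cG$.

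With a common pairing in hand, the next step is to analyse the colour-$i_0$ and colour-$i_1$ edges, which are exactly the edges that distinguish the two bubbles. The core hypothesis states that no $1$--dipole of $\cG$ may be annihilated. Via the melonic analysis of Lemma~\ref{lem:separate} and Lemma~\ref{lem:cutedges}, applied to the bubbles and the induced pairing, I would convert this into a transversality statement: no colour-$i_0$ (resp. colour-$i_1$) edge may join the two vertices of a melonic pair or close a face of species $\{i_0,j\}$ (resp. $\{i_1,j\}$) onto a single pair, since either would manufacture an elementary melon and hence an annihilable $1$--dipole. Consequently the subgraph on colours $\{i_0,i_1\}$ must thread \emph{distinct} melonic pairs, so it is a disjoint union of bicoloured cycles linking the pairs with no melonic short-cut.

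The decisive ingredient is the planar jacket. Writing $\cJ$ for it and applying Euler's relation with $|\cV_\cJ|=2p$ and $|\cE_\cJ|=(D+1)p$ fixes its face count $|\cF_\cJ|=2+(D-1)p$. Since the jackets internal to the two melonic bubbles are themselves planar, I would match this global count against the faces contributed by the melonic backbone to pin down the number and the lengths of the surviving $\{i_0,i_1\}$--faces. I expect this bookkeeping to force the $\{i_0,i_1\}$--subgraph to consist of exactly two cycles arranged \emph{without crossings}, i.e. the melonic pairs strung along a single line rather than branched --- precisely what selects the linear chain among all transversal configurations permitted by the previous step. Feeding the bubble data back into this linear arrangement identifies $\cG$ with the supermelon carrying $p-1$ nested $2$--dipoles of species $\{i_0,i_1\}$, namely $\cG_{\twodipole,\,p-1}$; as a consistency check one may verify through the $D$--bubble degree formula \eqref{eq:degreeDbubble} that these hypotheses force $\omega(\cG)=\tfrac{(D-1)!}{2}(D-2)$.

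The hard part will be this last rigidification. The melonic and core constraints are local, and it is only the global planarity of one jacket that can exclude branched or otherwise inequivalent configurations satisfying all of them; making the Euler-characteristic accounting precise --- correctly attributing each face to a melonic bubble or to the $\{i_0,i_1\}$--structure, with the right multiplicities --- is where the genuine work lies. I also expect the hypothesis $D\geq 4$ to enter exactly here, guaranteeing enough core colours for the two bubbles to overlap rigidly and for (\textbf{Q1})--(\textbf{Q2}) to apply, and I would treat the base case $p=2$, where $\cG$ can only be $\cG_{\twodipole}$, separately.
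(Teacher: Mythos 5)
Your opening step coincides with the paper's: both melonic $D$--bubbles contain all $2p$ vertices (a core graph has a single $D$--bubble per species), and property (\textbf{P3}) applied to the shared $(D-1)$--bubbles of species $\{\widehat i_0\widehat i_1\}$ forces the two melonic vertex pairings to agree, so $\cG$ inherits a single pairing. From there, however, you diverge from the paper and the divergence opens a genuine gap. The paper's central step is to prove that \emph{every edge of a color $k\notin\{i_0,i_1\}$ directly connects the two vertices of a melonic pair}. This is done by contradiction: fixing the cyclic order $\tau=(\dots jkl\dots)$ of the planar jacket and splitting into three subcases according to whether $i_0$, $i_1$, or both appear among the neighbours $j,l$ of $k$, one uses (\textbf{P2}) inside $\cR_{(\widehat i_0)}$ or $\cR_{(\widehat i_1)}$ to produce a $1$--dipole of species $k$ that the planar embedding forces to persist in $\cG$, contradicting the core hypothesis. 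Your proposal never establishes this fact, and it is precisely here that both the planarity and the core hypothesis do their work. Once it is known, the remaining $i_0$-- and $i_1$--edges are forced into the two bicoloured cycles and the planar embedding fixes their nesting; the structure you aim for is a consequence, not a starting point.

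The two substitutes you offer do not close this gap. First, your transversality claim --- that a colour-$i_0$ edge joining a melonic pair would ``manufacture an elementary melon and hence an annihilable $1$--dipole'' --- is circular: the pair becomes an elementary melon (equivalently, a $D$--dipole whose presence contradicts the core property for $p>1$) only if all the colours $k\notin\{i_0,i_1\}$ \emph{already} join that pair directly, which is exactly the unproven statement above; a melonic pair in $\cR_{(\widehat i_1)}$ may perfectly well be joined directly by its $i_0$--edge while other colours carry nontrivial melonic decorations. Second, the Euler-characteristic bookkeeping you defer to is unlikely to deliver the rigidification: the single planar jacket only contains the faces whose two colours are adjacent in $\tau$, and nothing in the hypotheses guarantees that $i_0$ and $i_1$ are adjacent in $\tau$, so the $\{i_0,i_1\}$--faces whose number and length you want to pin down need not appear in the jacket at all; moreover one planar jacket yields a single linear relation among $D$ of the $D(D+1)/2$ face counts, which cannot by itself exclude branched configurations. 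As written, the proposal therefore identifies the correct target structure but does not contain a proof; the missing ingredient is the paper's local three-case argument combining the cyclic order of the planar embedding with the absence of $1$--dipoles.
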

\begin{proof}
Since $\cG$ is a core graph, it contains a single $D$-bubble for each color, labelled $\cR_{(\widehat i)}$ for $i\in \{0,1,\dots, D\}$.  We shall denote the two melonic $D$-bubbles by $\cR_{(\widehat i_0)}$, $\cR_{(\widehat i_1)}$, respectively. Moreover, these $D$-bubbles each contain all the vertices of $\cG$.

Since $\cR_{(\widehat i_0)}$ and $\cR_{(\widehat i_1)}$ are closed melonic graphs, by (\textbf{P1}), they each have unique melonic vertex pairing. Thus, a priori, a given vertex in $\cG$ has two melonic vertex pairings: one induced by $\cR_{(\widehat i_0)}$ and another by $\cR_{(\widehat i_1)}$.  However, given the property (\textbf{P3}) of melonic vertex pairs, these vertex pairings coincide since both $\cR_{(\widehat i_0)}$ and $\cR_{(\widehat i_1)}$ contain the $(D-1)$--bubbles of species $(\widehat i_0\widehat i_1)$.
Thus, even though $\cG$ is not melonic it has a melonic vertex pairing.

Moreover, since $\cG$ possesses a planar jacket, we can draw it on the plane without crossings, such that the  colored edges incident to each and every vertex are ordered according to some $(D+1)$--cycle $\tau$.

We utilize the planar illustration of $\cG$ in Figure \ref{fig:dipole-k}, where the cycle has the form $\tau = (\dots jkl \dots)$. Consider a melonic vertex pair in $\cG$ and say that they do not share an edge of color $k\in\{\widehat i_0,\widehat i_1\}$.  Our argument has three subcases:
\begin{description}
\item[{\it i})] Neither $j$ nor $l$ equal $i_0$. In $\cR_{(\widehat i_0)}$, property (\textbf{P2}) ensures that cutting the edges of color $k$ disconnects a subgraph from the rest of $\cR_{(\widehat i_0)}$.  Thus, there is a 1-dipole of species $k$ in $\cR_{(\widehat i_0)}$.  Upon reinserting the lines of color $i_1$, the planarity of the jacket ensures that this 1--dipole of species $k$ persists into $\cG$.

\begin{figure}[H]
\centering
\includegraphics[scale = 1]{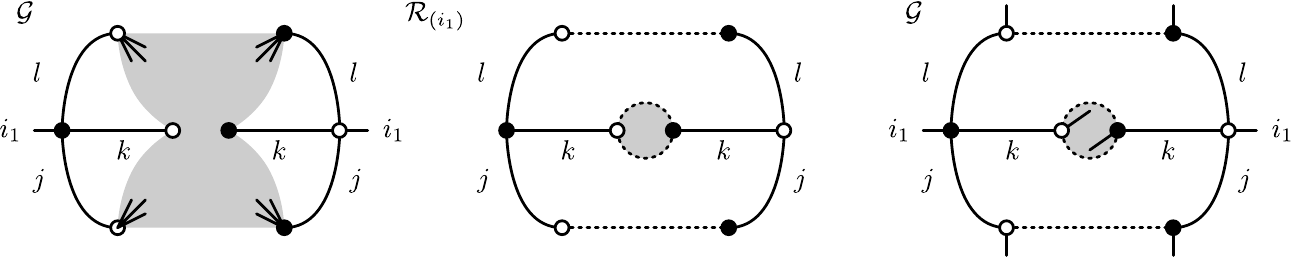}
\caption{\label{fig:dipole-k}Killing 1-dipoles of species $k$.}
\end{figure}

\item[{\it ii})] Neither $j$ nor $l$ equal $i_1$. The argument here is similar to the above with $i_1$ swapped for $i_0$.

\item[{\it iii})] $\{j,l\}=\{i_0,i_1\}$.  In $\cR_{(\widehat i_0)}$, property (\textbf{P2}) ensures that cutting the edges of color $k$ disconnects a subgraph from the rest of $\cR_{(\widehat i_0)}$. As a result, the two distinguished edges of color $k$ lie in the same face of color $(i_1k)$.   Swapping the roles of $i_0$ and $i_1$, we can show that the two distinguished edges of color $k$ lie in the same face of color $(i_0k)$.  Thus, back in $\cG$, there is a 1--dipole of species $k$.

\end{description}

Thus, all lines of color $k\notin\{i_0,i_1\}$ directly connect melonic vertex pairs. In turn, this means that $\cR_{(\widehat i_0)}$ is the $D$--colored supermelon with $p-1$ 1-dipoles of species $i_2$ inserted (and vice versa).  We illustrate $\cR_{(\widehat i_0)}$ in Figure \ref{fig:dipole-i}.

\begin{figure}[H]
\centering
\includegraphics[scale = 0.8]{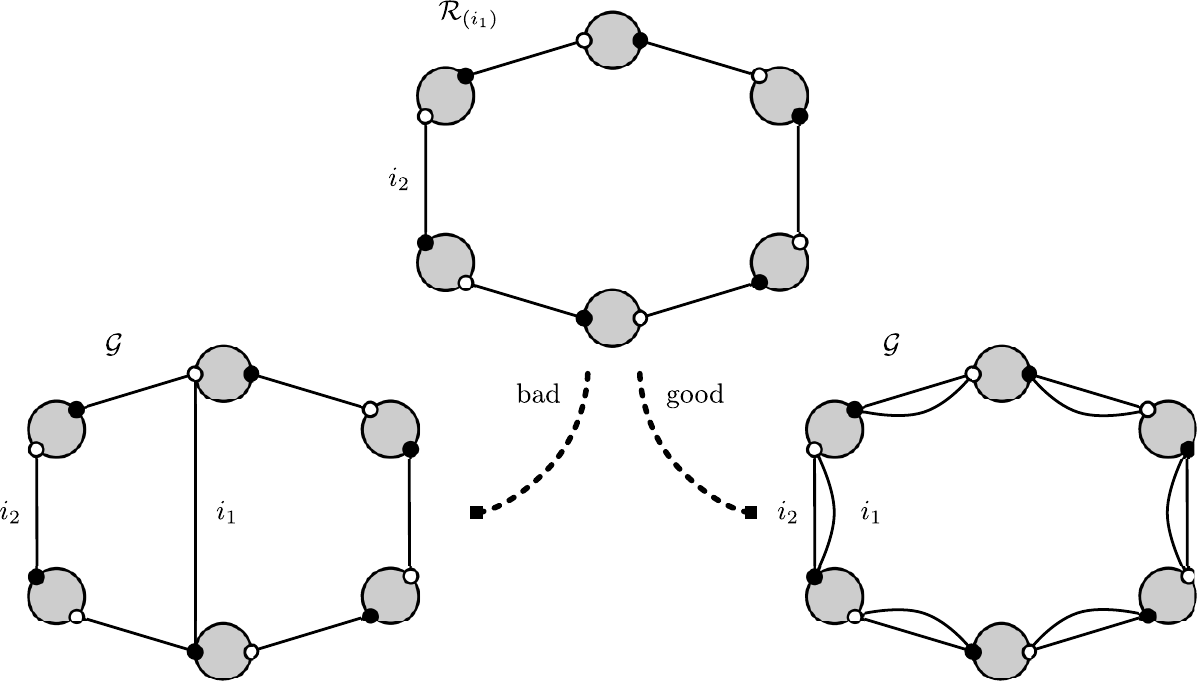}
\caption{\label{fig:dipole-i}Distinguishing good from bad choices for the reinsertion of edges of color $i_1$.}
\end{figure}

Once again, the planarity of the jacket ensures that, upon reinserting the edges of color $i_0$, $\cG$ is the $(D+1)$--colored supermelon with $p-1$ 2-dipoles of species $\{i_0,i_1\}$ inserted.

\end{proof}

A corollary of this statement is the following:
\begin{corollary}
\label{cor:3-melonic}
Consider a $(D+1)$-colored core graph $\cG$ with $D\geq 4$. If $\cG$ possesses three (or more) melonic $D$--bubbles and a planar jacket, then it is the supermelon.
\end{corollary}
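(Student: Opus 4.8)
The plan is to reduce the statement to Lemma~\ref{lem:2-melonic} and then rule out all but the trivial case. Since $\cG$ is a core graph it carries exactly one $D$--bubble per color, and by hypothesis at least three of these, say of species $\{i_0\}$, $\{i_1\}$, $\{i_2\}$, are melonic. I would fix the first two and note that the proof of Lemma~\ref{lem:2-melonic} invokes only the melonicity of $\cR_{(\widehat i_0)}$ and $\cR_{(\widehat i_1)}$ together with the planar jacket; it therefore applies unchanged and shows that $\cG$ is the supermelon with $p-1$ inserted $2$--dipoles of species $\{i_0,i_1\}$. Because $i_2\neq i_0,i_1$, the third melonic bubble $\cR_{(\widehat i_2)}$ retains both colors $i_0$ and $i_1$ and thus \emph{sees} all of these insertions.

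Next I would show that such a graph has positive-degree $D$--bubbles away from the colors $i_0,i_1$ as soon as $p\geq 2$. Each $2$--dipole creation raises the degree by $\tfrac{(D-1)!}{2}(k-1)(D-k)$ with $k=2$, i.e.\ by $\tfrac{(D-1)!}{2}(D-2)$, by \eqref{eq:degreedipole}, so that $\omega(\cG)=(p-1)\tfrac{(D-1)!}{2}(D-2)$ (consistent with $\omega(\cG_{\twodipole})$ at $p=2$). Feeding this into the $D$--bubble expansion \eqref{eq:degreeDbubble}, which for a core graph ($B^{[D]}=D+1$) reads $\omega(\cG)=\tfrac{(D-1)!}{2}(p-1)+\sum_{i}\omega(\cR_{(\widehat i)})$, and using that the $D-1$ colors other than $i_0,i_1$ enter the construction symmetrically so that their bubbles share a common degree $\omega_*$, I obtain $(D-1)\,\omega_*=\tfrac{(D-1)!}{2}(p-1)(D-3)$, hence $\omega_*=\tfrac{(D-2)!}{2}(p-1)(D-3)$. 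For $D\geq 4$ and $p\geq 2$ this is strictly positive, so no spectator bubble---in particular not $\cR_{(\widehat i_2)}$---can be melonic, contradicting the hypothesis. Therefore $p=1$ and $\cG=\cG_{\supermelon}$.

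The main obstacle is the mismatch between the hypotheses: Lemma~\ref{lem:2-melonic} is phrased for \emph{exactly} two melonic $D$--bubbles, whereas here there are three or more, so I must justify applying it (or its proof) to a chosen pair. The cleanest way is the observation above that the supermelon carrying $p-1$ $2$--dipoles of a single species has precisely two melonic $D$--bubbles when $p\geq 2$; the only delicate step in establishing it is the symmetry claim identifying the degrees of the $D-1$ spectator bubbles. Should that feel too quick, an equivalent route is to argue directly that inside $\cR_{(\widehat i_2)}$ the inserted $2$--dipoles persist as $2$--dipoles of a $D$--colored graph, contributing $(p-1)\tfrac{(D-2)!}{2}(D-3)>0$ in total to its degree, which both recovers $\omega_*$ and rules out melonicity without appealing to symmetry.
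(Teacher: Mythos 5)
Your argument is correct and is essentially the derivation the paper intends (it states the corollary without proof): apply the proof of Lemma~\ref{lem:2-melonic} to two of the melonic $D$--bubbles --- which, as you rightly note, never uses the word ``exactly'' --- and then use the third melonic bubble to force $p=1$, since for $p\geq 2$ the spectator bubbles of the $2$--dipole chain have degree $(p-1)\tfrac{(D-2)!}{2}(D-3)>0$ when $D\geq 4$. Your degree bookkeeping via \eqref{eq:degreedipole} and \eqref{eq:degreeDbubble} checks out (and your ``equivalent route'' of observing that the inserted $2$--dipoles persist inside $\cR_{(\widehat i_2)}$ is the cleaner way to avoid the symmetry claim).
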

Now for the main result of this subsection, the identification of the \nlo\ core graphs:
\begin{repproposition}{prop:core}
Consider the \iid\ tensor model with $D\geq 3$. The graphs $\cG_{\twodipole}$ are the \nlo\ core graphs.
\end{repproposition}
\begin{proof} One uses an inductive argument on the number of colors. A $k$--colored core graph is denoted by $ \up{k}\cG$. To start off, one shows that the statement holds true for $D=3$. While Lemma \ref{lem:2-melonic} does not apply to this case, luckily it is simple to show directly. The following three statements hold in $D=3$:  \textit{i}) $\omega(\up{3}\cG_{\twodipole}) = 1$, so it is certainly a \nlo\ core graph (it is only sub-dominant by one power of $N$); \textit{ii}) $\up{3}\cG_{\twodipole}$ is the only core graph with $p = 2$; \textit{iii}) equation \eqref{eq:degreeDbubble} implies that $\omega(\up{3}\cG) \geq p - 1$, so given the second point, any other graph has a higher degree than $\up{3}\cG_{\twodipole}$. Thus, the \nlo\ core graphs are as proposed: $\up{3}\cG_{\nlo} = \up{3}\cG_{\twodipole}$.

Say next that the statement holds true for the $D$-colored model: $\up{D}\cG_{\nlo} = \up{D}\cG_{\twodipole}$. Thus $\omega(\up{D}\cG_{\nlo})=\frac{(D-2)!}{2} (D-3)$.

Now, say that despite this assumption, the statement does not hold for the $(D+1)$--colored model. Alas, the graphs $\up{D+1}\cG_{\twodipole}$, are trumped by some other graph $\up{D+1}\cG_{\textsc{sneaky}}$. In other words, their degrees satisfy the following inequality:
\be
\label{eq:inequality}
\omega\big(\up{D+1}\cG_{\textsc{sneaky}}\big)\leq \omega\big(\up{D+1}\cG_{\twodipole}\big) = \frac{(D-1)!}{2}(D-2)\,.
\ee
 A generic core graph of the $(D+1)$--colored model satisfies the equality:
\begin{equation}
\label{eq:degree}
\omega\big(\up{D+1}\cG\big) = \frac{(D-1)!}{2} (p-1) + \sum_i \omega\big(\up{D}\cR_{(i)}\big) \,.
\end{equation}
where $\up{D}\cR_{(i)}$ are its $D$--bubbles. In particular, $\up{D+1}\cG_{\textsc{sneaky}}$ obeys such a relation. They key now is to put a lower bound on both the number of vertices and the degrees of the $D$--bubbles and force a contradiction.

First of all, one knows also that $p\geq 2$, since the only core graph with $p=1$ is the supermelon. Secondly, $\up{D+1}\cG_{\textsc{sneaky}}$ possesses at least one planar jacket, else $\omega\big(\up{D+1}\cG_{\textsc{sneaky}}\big) = \sum_{\cJ}g_{\cJ} \geq D!/2$.   Moreover, appealing to Corollary \ref{cor:3-melonic}, one can exclude all cases with three or more melonic D--bubbles.  Thus, given that there are $D+1$ $D$--bubbles in total, the configuration minimizing \eqref{eq:degree} has $D-1$ of them at \nlo.   But remember that $D$--bubbles are $D$--colored graphs, so the lower bound for the degree of $\up{D+1}\cG_{\textsc{sneaky}}$ is already:
\begin{equation}
{\renewcommand{\arraystretch}{2}
\begin{array}{rcl}
\omega(\up{D+1}\cG_{\textsc{sneaky}}) &\geq& \dfrac{(D-1)!}{2}(p-1) + (D-1)\dfrac{(D-2)!}{2}(D-3) + \omega(\up{D}\cR_{(i_0)}) + \omega(\up{D}\cR_{(i_1)}) \\
&= & \dfrac{(D-1)!}{2}(p+D-4) +  \omega(\up{D}\cR_{(i_0)}) + \omega(\up{D}\cR_{(i_1)})\,.
\end{array}
}
\end{equation}
In order to satisfy the bound \eqref{eq:inequality}, one must set $p=2$ and $\omega(\up{D}\cR_{(i_0)}) = \omega(\up{D}\cR_{(i_1)}) = 0$.  Thus, one falls into the case dealt with by Lemma \ref{lem:2-melonic} and a contradiction follows thereafter.
\end{proof}

\subsection{Generating all {\sc nlo} graphs}
\label{ssec:labeling}

From these core graphs, one can generate all graphs at \nlo\ by performing arbitrary sequences of 1--dipole moves. At \lo, such sequences resulted in the insertion of some set of elementary melons into the core graph. At \nlo, while it is certainly true that most 1--dipole moves still result in the creation/annihilation of elementary melons, there is a subset that departs from this rule.  %
The set of graphs $S$ is certainly obtained by performing some sequence of 1-dipoles moves on the \nlo\ core graph $\cG_{\twodipole}$. Our aim is to show now that these are all the graphs.

%
%
%
%

With these properties at our disposal, we can proceed to analyse the \nlo\ sector.

\begin{repdefinition}{def:closedgraphs}
One denotes by $S_{\ell,n}$, the set of closed graphs derived from $\cG_{\twodipole,\ell}$ by inserting arbitrary combinations of $n$ elementary melons.
\end{repdefinition}

\begin{repproposition}{prop:nlographsClosed}
The set of graphs:
\begin{equation}
S = \bigcup_{\substack{\ell \geq 1\\[0.05cm]n\geq 0}} S_{\ell, n}\;.\nonumber
\end{equation}
is closed under 1--dipole creation and annihilation.
\end{repproposition}
\begin{proof}
Every element of $S_{\ell,n}$ is built from the graph $\cG_{\twodipole,\ell}$ decorated by some combination of $n$ melons. A generic graph in $S_{\ell, n}$ is drawn in Figure \ref{fig:closedNLO}. 

\begin{figure}[htb]\centering
\begin{minipage}{0.5\linewidth}
\centering
\includegraphics[scale = 0.9]{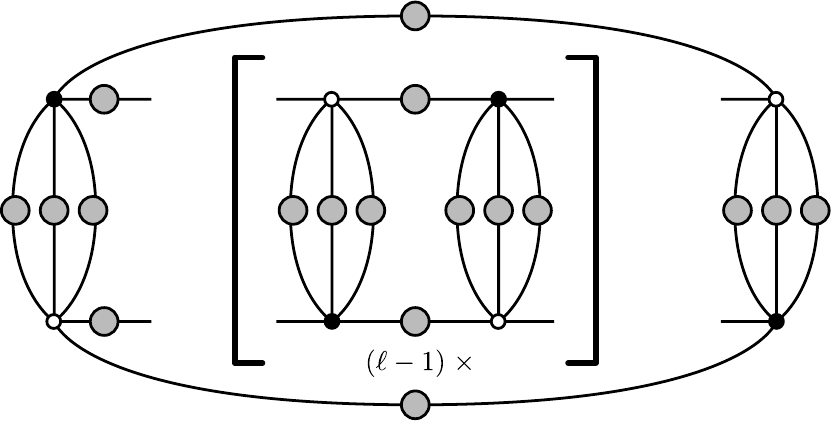}
\caption{\label{fig:closedNLO} A generic element of $S_{\ell, n}$, where $n$ counts the number of elementary melon insertions.}
\end{minipage}\hspace{0.05\linewidth}
\begin{minipage}{0.4\linewidth}
\centering
\includegraphics[scale = 1]{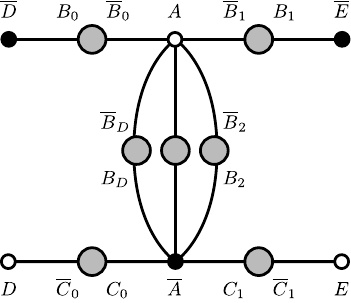}
\caption{\label{fig:nlosubgraph} A partial subgraph.}
\end{minipage}
\end{figure}

\noindent{\bf 1--dipole creation:} From Definition \ref{def:kdipole}, to create a 1--dipole of some species $\{i\}$, one must pick $D$ edges of distinct colors taken from the set $\{\widehat i\}$ that ensure the separation property.   It will emerge that any 1--dipole involves $D$ edges from a single partial subgraph of the type in Figure \ref{fig:nlosubgraph}.

To begin, the analysis of Section \ref{ssec:identifying} shows that the graphs in $S_{\ell,n}$ have
 two melonic $D$--bubbles (of species $\{\widehat i_0\}$ and $\{\widehat i_1\}$), which induce a
unique melonic vertex pairing.\footnote{\label{fn:threed} For $D = 3$, there is little subtlety in that the \nlo\
graphs have four melonic 3--bubbles. For all graphs in $S_{\ell, n}$ with $\ell > 1$, this poses no
problem. However, examining the core graph $\cG_{\twodipole}$ explicitly, one notices that it is
symmetric and does not have a distinguished melonic vertex pairing. For this subset of graphs
there are two different choices of pairing. This subtlety will be important later.}  One can easily
identify this melonic vertex pairing in Figures \ref{fig:closedNLO} and \ref{fig:nlosubgraph}.
While the rooted melonic insertions have their melonic vertex pairs, those vertices exterior to any
melonic decoration are in pairs of type $A\overline A$, $D\overline D$ and so forth.

In general, the creation of a 1--dipole in the graph corresponds to the creation of a 0-- or 1--dipole its melonic $D$--bubbles. By Lemma \ref{lem:separate}, a 1--dipole insertion separates some melonic vertex pair of a melonic $D$--bubble (and a 0--dipole certainly does).  Thus, a 1--dipole insertion separates some melonic vertex pair of the graph.

Recall from Definition \ref{def:cuts}: $\cE^{v\overline v}_{1\textrm{--dipole}, i}$ denotes the set of edges that may be cut to create a 1--dipole of species $\{i\}$ in the graph, so as to separate the melonic vertex pair $v\bar v$.

%

One proceeds on a case by case basis.
\begin{description}
\item[Case 1:]  Say one wishes to create 1--dipole of species $\{i\}$ to separate a melonic vertex pair of type $A\overline A$.

The key is to examine the effect of the corresponding 0--/1--dipole on the melonic $D$--bubbles. This restricts the elements of $\cE^{v\overline v}_{1\textrm{--dipole}, i}$.  One finds that in all cases, $\cE^{v\overline v}_{1\textrm{--dipole}, i}$ contains those edges of species $j$ (with $j\in \{\widehat i\}$) in Figure \ref{fig:nlosubgraph} that are exterior to any melonic decoration.  Any choice of $D$ distinctly colored edges from this set sends the graph in $S_{\ell, n}$ to a graph in $S_{\ell, n+1}$.

Without loss of generality, let us assume that the melonic decoration between $B_j$ and $\overline B_j$ is \onepi, so that $B_j$ and $\overline B_j$ form a melonic vertex pair. If it is not, then one decomposes this decoration into its \onepi\ components and proceeds with one of the resulting vertex pairs.

\item[Case 2:]  Say one wishes to create a 1--dipole of species $\{i\}$ to separate a melonic vertex pair of type $B_j\overline B_j$.

\begin{description}
\item[$j\in\{\widehat i_0,\widehat i_1\}$:]  Unless $D = 3$ and $\ell = 1$, 
$\cE^{B_{j}\overline B_{j}}_{1\textrm{--dipole}, i}$ contains only edges interior to the rooted
melonic subgraph $A\overline B_j B_j \overline A$. Thus, we may apply Proposition
\ref{prop:melonclosure} and one sees that 1--dipole creation sends the graph from $S_{\ell,n}$ to a
graph in $S_{\ell, n+1}$.
The case $D=3$ and $\ell=1$ is special, as explained in footnote \ref{fn:threed}, since there are two possible pairings for the four vertices exterior to any melonic decoration. However, a pairing can always be chosen so that the assumption $j\in \{ i_0,  i_1\}$ is valid and therefore we can avoid this case.

\item[$j = i_0$, $i\in\{\widehat i_1\}$:]  $\cE^{B_{0}\overline B_{0}}_{1\textrm{--dipole}, i}$ contains only edges interior to the rooted melonic subgraph  $A\overline B_0 B_0 \overline D$.  As above, we may apply Proposition \ref{prop:melonclosure}.

\item[$j = i_1$, $i\in\{\widehat i_0\}$:] As above, with the roles of $i_0$ and $i_1$ swapped.

\item[$j = i_0$, $i = i_1$:]  As elements of $\cE^{B_{0}\overline B_{0}}_{1\textrm{--dipole},
i_1}$, those edges of color $\{k\}$ with $k\in\{\widehat i_0\}$ are interior to the melonic
decoration between $B_0$ and $\overline B_0$.  Those edges of color $\{i_0\}$ lie along the paths $A\overline B_0 B_0 \overline D$ and $D\overline C_0 C_0 \overline A$. If the melonic decorations  $B_0\overline B_0$ and $C_0\overline C_0$ are \onepi, then the viable edges of color $\{i_0\}$ are explicitly: $A\overline B_0$, $B_0\overline D$, $D\overline
C_0$, $C_0\overline A$. If these melonic decorations are not \onepi, then the edges of color $\{i_0\}$ separating the \onepi\ components must be added to the viable set. If one chooses to cut an edge of color $\{i_0\}$ contained in the path $A\overline B_0 B_0 \overline D$, then one sends the graph in $S_{\ell, n}$ to a graph in
$S_{\ell, n+1}$. However, if one chooses an edge of color $\{i_0\}$ contained in the path $D\overline C_0 C_0 \overline A$, then one sends
the graph in $S_{\ell, n}$ to a graph in $S_{\ell+1, n-1}$. This is the type of move illustrated in
Fig \ref{fig:extension}.

\item[$j = i_1$, $i = i_0$:] As above, with the roles of $i_0$ and $i_1$ swapped.

\end{description}

\item[Case 3:] A 1--dipole to separate a melonic vertex pair nested at least once inside a melonic decoration. In this case, the viable edge set is contained entirely within the melonic insertion.  Thus, we may apply Proposition \ref{prop:melonclosure} yet again.

\end{description}

\noindent{\bf 1--dipole annihilation:} This involves picking an edge of color $\{i\}$ such that after contraction, the number of $D$--bubbles of species $\{\widehat i\}$ is reduced by one.
\begin{description}
\item[Case 1:] One picks a 1--dipole edge \textit{interior} to some melonic insertion.  Then, Proposition \ref{prop:melonclosure} holds and the graph in $S_{\ell, n}$ is sent to a graph in $S_{\ell, n-1}$.
\item[Case 2:] One picks a 1--dipole edge \textit{exterior} to any melonic insertion. Then, the only 1--dipole edges are those or color $\{i_0\}$ and $\{i_1\}$. Consider the path $A\rightarrow \overline B_0\rightarrow  B_0 \rightarrow  \overline D$.  There are two subcases:

\begin{description}
\item[--] The path from $A$ to $\overline D$ in Figure \ref{fig:nlosubgraph} is decorated by a melonic insertion. Then, the edge of color $\{i_0\}$ joining vertices $A$ and $\overline B_0$ is a 1--dipole edge.  Contracting this 1--dipole sends the graph in $S_{\ell,n}$ to a graph in $S_{\ell, n-1}$.
\item[--] The path from $A$ to $\overline D$ is undecorated.  Then, the edge joining the vertices $A$ and $\widetilde D$ is a 1--dipole edge, whose contraction sends the graph in $S_{\ell, n}$ to a graph in $S_{\ell -1, n+1}$.  This is the inverse of the move illustrated in Figure \ref{fig:extension}.

\end{description}
\end{description}

\end{proof}
By cutting some single edge of the graphs in $S$, one generates the set of \nlo\ connected 2--point graphs, denoted by $T$. These are defined and catalogued in Section \ref{ssec:nlo}.
\begin{repproposition}{prop:nlographs}
For the \iid\ model, the set of \nlo\ connected 2--point graphs is:
\begin{equation}
T = \bigcup_{\substack{\ell\geq 1 \\[0.05cm] m \geq 0\\[0.05cm] n \geq 0}} T_{\ell, m,n}\;.\nonumber
\end{equation}
\end{repproposition}
\begin{proof}
This is a proof by inspection.  One examines the graph in $S_{\ell, n}$, as drawn in Figure \ref{fig:closedNLO}.  For our purposes, there are two types edges in such a graph:
\begin{description}
\item[--] Cutting an edge interior to some melonic insertion sends a graph in $S_{\ell,n}$ to some graph in $T_{\ell,m,n-m}$, where $m$ denotes the level of nesting (within the melonic insertion) of the cut edge;
\item[--] Cutting an edge exterior to any melonic insertion sends a graph in $S_{\ell,n}$ to some graph in $T_{\ell,0,n}$.
\end{description}
Moreover, taking a graph any graph in $T_{\ell, m,n}$ and joining its two open edges produces a graph in $S_{\ell,n+m}$.
\end{proof}

\section*{Acknowledgements}
We thank Vincent Rivasseau, Razvan Gurau and especially Dario Benedetti for frequent helpful discussions.  Wojciech Kami\'nski acknowledges grant of Polish Narodowe Centrum Nauki number 501/11-02-00/66-4162.

{\footnotesize

}

\end{document}